\pdfoutput=1
\documentclass[a4paper,USenglish,cleveref,autoref,thm-restate]{lipics-v2021}
\nolinenumbers
\hideLIPIcs

\usepackage{amsmath,amssymb}
\usepackage{xspace}
\usepackage{booktabs}
\usepackage{tikz}
\definecolor{wspdblue}{HTML}{1F77B4}
\definecolor{wspdorange}{HTML}{E66100}

\newcommand{\eps}{\varepsilon}
\DeclareMathOperator{\diam}{diam}
\DeclareMathOperator{\dist}{dist}
\newcommand{\opt}{\mathrm{OPT}}
\newcommand{\R}{\mathbb{R}}
\newcommand{\W}{\mathcal{W}}
\newcommand{\ball}{\mathbb{B}}
\newcommand{\thetacell}{\theta_{\square}}
\newcommand{\minwspd}{\textsc{minWSPD}\xspace}
\crefname{observation}{Observation}{Observations}
\Crefname{observation}{Observation}{Observations}

\title{A Bicriteria Approximation Scheme for the Minimum Well-Separated Pair Decomposition}
\titlerunning{A Bicriteria Approximation Scheme for the Minimum WSPD}

\author{Guilherme D. da Fonseca}{Aix-Marseille Universit\'e and LIS, France}{}{https://orcid.org/0000-0002-9807-028X}{}
\authorrunning{G.\,D. da Fonseca}
\Copyright{Guilherme D. da Fonseca}

\ccsdesc[500]{Theory of computation~Computational geometry}
\ccsdesc[300]{Theory of computation~Approximation algorithms analysis}
\keywords{Well-separated pair decomposition, approximation scheme, set cover, local growth, doubling metrics}

\acknowledgements{Use of AI: This paper was prepared with substantial help from the AI assistant Claude (Anthropic). It took part in developing the algorithms and proofs, checking the references, and writing the text. The author checked all results and is responsible for the content.}

\EventEditors{}
\EventNoEds{0}
\EventLongTitle{}
\EventShortTitle{}
\EventAcronym{}
\EventYear{}
\EventDate{}
\EventLocation{}
\EventLogo{}
\SeriesVolume{}
\ArticleNo{}

\begin{document}
\maketitle

\begin{abstract}
A well-separated pair decomposition (WSPD) of a set $P$ of $n$ points covers all $\binom n2$ pairs of points by a family of \emph{dumbbells} $\{X,Y\}$, each covering the pairs of $X\times Y$, such that the diameters of $X$ and $Y$ are at most an $\eps$-fraction of their distance. In $\R^d$, WSPDs with $O(n/\eps^d)$ dumbbells always exist, but many inputs admit much smaller ones. However, computing a WSPD of minimum size is NP-hard in the plane. For $d\ge2$, the only known approximation that preserves the separation has a factor of $2^{O(d)}$, and Buchin et al.\ (ESA 2026) asked whether a $(1+\delta)$-approximation exists.

We give a bicriteria approximation scheme: for fixed $d$ and $\delta\in(0,1]$, we compute in $f(d,\delta)\cdot n^{6}$ time a $(1+\delta)\eps$-WSPD whose size is at most $(1+\delta)$ times the minimum size of an $\eps$-WSPD. The same holds in metric spaces of bounded doubling dimension. The algorithm combines shifting over scales with the local-growth technique of Nieberg, Hurink, and Kern, applied to a graph whose vertices are the pairs of points. In contrast, we show that in general metric spaces, relaxing the separation by a factor below $2$ keeps the problem hard: for every fixed rational $\eps\le1/4$, it is NP-hard to compute such a relaxed WSPD whose size is within a factor $n^c$ of the minimum, for every $c<2/3$.
\end{abstract}

\section{Introduction}
\label{sec:intro}

Let $P$ be a set of $n$ points in a metric space $(M,\dist)$ and let $\eps>0$. For $X\subseteq P$ let $\diam(X)$ be its diameter, and for $X,Y\subseteq P$ let $\dist(X,Y)=\min_{x\in X,y\in Y}\dist(x,y)$. A \emph{dumbbell} is a set $\{X,Y\}$ of two disjoint nonempty subsets of $P$, called its \emph{sides}. It is \emph{$\eps$-separated} if
\[
  \max\{\diam(X),\diam(Y)\}\le \eps\cdot\dist(X,Y).
\]
Throughout the paper, a \emph{pair} is an unordered pair $\{p,q\}$ of distinct points of $P$, and the dumbbell $\{X,Y\}$ \emph{covers} the pairs $\{x,y\}$ with $x\in X$ and $y\in Y$. A family $\W$ of $\eps$-separated dumbbells is an \emph{$\eps$-WSPD} of $P$ if it covers all $\binom n2$ pairs of $P$. Its \emph{size} is $|\W|$. It is a \emph{partition} WSPD if every pair is covered exactly once. Unless stated otherwise, WSPDs in this paper are \emph{cover} WSPDs, that is, pairs may be covered several times.

Since their introduction by Callahan and Kosaraju~\cite{CK95}, WSPDs have become a basic tool for proximity problems, including spanners, approximate minimum spanning trees, nearest neighbors, and distance oracles; see~\cite{NS07,Smi18}. In $\R^d$, a partition $\eps$-WSPD of size $O(n/\eps^d)$ can be computed in $O(n\log n+n/\eps^d)$ time~\cite{CK95}, and similar bounds hold in doubling metrics~\cite{HM06}. On many inputs, however, much smaller WSPDs exist. For instance, the exponentially spread points of \cref{sec:lower} admit a WSPD of size $n-1$, which is optimal by \cref{lem:mst}. Since the size of a WSPD determines the size of the resulting spanners and the cost of many downstream algorithms, it is natural to ask for a smallest one.
In \minwspd, we are given $P$ and $\eps\in(0,1)$ and we must compute an $\eps$-WSPD of minimum size. We write $\opt_\eps$, or $\opt$ when $\eps$ is clear, for this minimum size of a cover WSPD.

\subparagraph*{Previous work.}
Buchin et al.~\cite{BCHKLPRW26} initiated the study of \minwspd. They showed that the problem is NP-hard in the plane when $\eps$ depends on $n$, and that in general metric spaces, already for the fixed value $\eps=1/3$, it cannot be approximated within a factor $n^{c}$ for some constant $c>0$ unless P${}={}$NP, by padding a reduction from clique cover with many copies of the input graph at a large distance. On the real line, they gave a PTAS based on local search for pseudodisks~\cite{MR10}, an $O(n\log n)$-time $3$-approximation based on the local-ratio technique~\cite{Bar00}, and a $10$-approximation for the partition version. They also showed that the net-tree WSPD of Har-Peled and Mendel~\cite{HM06} is a $2^{O(\kappa)}$-approximation in metric spaces of doubling dimension $\kappa$, computable in $2^{O(\kappa)}(n\log n+\opt)$ expected time. Since the net-tree WSPD is a partition WSPD and $\opt_\eps$ is at most the minimum size of a partition $\eps$-WSPD, this is also a $2^{O(d)}$-approximation for the partition version in $\R^d$. They asked whether a $(1+\delta)$-approximation, or a simple $O(1)$-approximation, exists in $\R^d$ for $d\ge 2$.

Har-Peled, Raichel, and Robson~\cite[Thm.~2.14]{HRR25} construct, in polynomial time and in every finite metric space, an $\eps$-WSPD whose size is at most $7\,\opt_{\eps/33}$ for every $\eps\in(0,1)$. Their theorem is stated with unspecified constants, and the values $7$ and $33$ come from its proof. This bicriteria guarantee loses constant factors in both the size and the separation.

\begin{table}[t]
\centering\small\setlength{\tabcolsep}{3pt}
\caption{Approximation results for \minwspd. In every row, the size is compared with $\opt_\eps$, the minimum size of a cover $\eps$-WSPD. The row of~\cite{HRR25} requires $\eps<1/33$. The last row is a hardness result for every fixed rational $\eps\le1/4$, which holds for every $\tau<2$ and every $c<2/3$, unless P${}={}$NP.}
\label{tab:results}
\begin{tabular}{llllll}
\toprule
Setting & Problem & Size & Separation & Time & Reference \\
\midrule
$\R^1$ & cover & $(1+\delta)\,\opt_\eps$ & $\eps$ & $n^{O(1/\delta^2)}$ & \cite{BCHKLPRW26} \\
$\R^1$ & cover & $3\,\opt_\eps$ & $\eps$ & $O(n\log n)$ & \cite{BCHKLPRW26} \\
$\R^1$ & both & $10\,\opt_\eps$ & $\eps$ & polynomial & \cite{BCHKLPRW26} \\
Doubling dim.\ $\kappa$ & both & $2^{O(\kappa)}\opt_\eps$ & $\eps$ & $2^{O(\kappa)}(n\log n+\opt)$ exp. & \cite{BCHKLPRW26,HM06} \\
Any metric & cover & $7\,\opt_{\eps}$ & $33\eps$ & polynomial & \cite{HRR25} \\
\addlinespace[2pt]
$\R^d$, fixed $d$ & cover & $(1+\delta)\,\opt_\eps$ & $(1+\delta)\eps$ & $f(d,\delta)\,n^{6}$ & \cref{thm:ptas} \\
Doubling dim.\ $\kappa$ & cover & $(1+\delta)\,\opt_\eps$ & $(1+\delta)\eps$ & $f(\kappa,\delta)\,n^{6}$ & \cref{thm:doubling} \\
Any metric & cover & no $n^c\opt_\eps$ & $\tau\eps$, $\tau<2$ & --- & \cref{thm:hard} \\
\bottomrule
\end{tabular}
\end{table}

\subparagraph*{Our results.}
\Cref{tab:results} summarizes the known and new results.
\begin{enumerate}
\item \textbf{A bicriteria approximation scheme} (\cref{thm:ptas}). For fixed $d$ and $\delta\in(0,1]$, we compute a $(1+\delta)\eps$-WSPD of $P\subset\R^d$ of size at most $(1+\delta)\opt_\eps$ in $f(d,\delta)\cdot n^{6}$ time, where $f(d,\delta)=2^{2^{O(d(1/\delta+\log(d/\delta)))}}$. The running time does not depend on $\eps$. Equivalently, we compute an $\eps$-WSPD of size at most $(1+\delta)\opt_{\eps/(1+\delta)}$. In fixed-dimensional Euclidean space, this improves the bicriteria factors $(7,33)$ of Har-Peled et al.~\cite{HRR25} to $(1+\delta,1+\delta)$, at the cost of a much larger running time, whereas their result holds in every metric space. The question of Buchin et al., a $(1+\delta)$-approximation that keeps the separation $\eps$, remains open (see \cref{rem:ptas}). The scheme extends to metric spaces of fixed doubling dimension $\kappa$ (\cref{thm:doubling}). There, relaxing the separation by a factor $1+\delta$ brings the size factor from the $2^{O(\kappa)}$ of the net-tree WSPD down to $1+\delta$, with the same running time bound as in $\R^d$, with $\kappa$ in place of $d$.

\item \textbf{Limits in general metrics} (\cref{thm:hard}). By a direct reduction from clique cover, for every rational $\eps$ and every $\tau<2$ with $\tau\eps<1/2$ (in particular, every $\tau<2$ when $\eps\le1/4$), unless P${}={}$NP, no polynomial-time algorithm computes a $\tau\eps$-WSPD of size $\opt_\eps$ in general metric spaces, and, using the hardness of approximating the chromatic number, none computes one of size at most $n^c\opt_\eps$ for any $c<2/3$. Hence, unless P${}={}$NP, no bicriteria approximation scheme exists there. This extends the hardness result of Buchin et al.~\cite{BCHKLPRW26}, which is for $\eps=1/3$ and exact separation and whose padding with many copies we reuse, with a self-contained proof and an explicit exponent. For $\eps<1/33$, Har-Peled et al.~\cite{HRR25} achieve a constant size factor with the separation factor $33$, so for small $\eps$ the smallest separation factor that allows a constant size factor lies between $2$ and $33$.
\end{enumerate}
We also observe that in every metric space every $\eps$-WSPD has at least $n-1$ dumbbells, which is tight already on the line (\cref{lem:mst}). For partition WSPDs, this follows from the Graham--Pollak theorem~\cite{GP71}, as noted by Buchin et al.~\cite{BCHKLPRW26}. This argument does not extend to covers, since $K_n$ can be covered by $\lceil\log_2 n\rceil$ bicliques; for $\eps=1$, these bicliques are $1$-separated dumbbells of the uniform metric, so the bound needs $\eps<1$. The proof adapts Lemma~3.1 of Callahan and Kosaraju~\cite{CK93}, who show for partition WSPDs in $\R^d$ that replacing every edge of a minimum spanning tree by any pair of the same dumbbell yields a spanning tree.

\subparagraph*{Overview of techniques.}
Call two pairs \emph{compatible} if some $\eps$-separated dumbbell covers both. Then \minwspd is a clique cover problem in the compatibility graph on the $\binom n2$ pairs, restricted to the cliques that are realized by dumbbells. We solve it with the local-growth technique of Nieberg, Hurink, and Kern~\cite{NHK08,NH05}, in the form used by Pirwani and Salavatipour~\cite{PS12}. Around an uncovered pair, we grow hop-balls in the compatibility graph until increasing the radius by one multiplies the optimal cover size of the ball by at most $1+\delta/3$. No dumbbell of an optimal solution covers pairs of two different resulting cores, which gives the lower bound. Two further ingredients make this work. First, a separated dumbbell only covers pairs whose lengths differ by a factor at most $1+2\eps$, so a shifting over scales in the spirit of Hochbaum and Maass~\cite{HM85} reduces the problem to bands of bounded scale ratio, in which the optimal cover size of a hop-ball grows polynomially with its radius. Second, inside a band, rounding the sides of a dumbbell to a grid of cell size $\Theta(\delta\eps)$ times the scale relaxes the separation by a factor at most $1+\delta$ and turns every local subproblem into a finite problem of size depending only on $d$ and $\delta$, which we solve by brute force.

\subparagraph*{Further related work.}
Beyond doubling metrics, WSPDs of near-linear size exist for unit-disk graph metrics~\cite{GZ05,HRR25}, low-density graphs~\cite{GW26}, and $c$-packed graphs~\cite{DGRSW25}. Relaxations of WSPDs include semi-separated pair decompositions, which require only one side to be small~\cite{Var98,AH12}, and the approximate biclique covers of Buchin et al.~\cite{BCHKLPRW26}, which drop the condition on the diameters. Since every dumbbell is a biclique of the complete graph on $P$, \minwspd is a restricted biclique cover problem. For general bipartite graphs on $n$ vertices, minimum biclique cover is hard to approximate within $n^{1-\gamma}$ for every $\gamma>0$~\cite{CHHK14}. For weighted biclique covers of $\delta$-disk bipartite graphs, Esmailpour, Le, and Sintos~\cite[Thm.~D.3]{ELS26} relax the distance threshold by a factor $1+\varphi$, as we relax the separation, and lose a factor $O((\varphi^{-d}+\log n)\log n)$ in the cost. Similarly, Har-Peled and Lee~\cite{HL12} give a PTAS for covering points by weighted fat objects that may expand by a $\delta$-fraction of their diameter. Local search gives approximation schemes for many geometric set cover and hitting set problems with pseudodisks~\cite{MR10}, and shifting and local growth give approximation schemes for covering and packing problems in geometric intersection graphs~\cite{HM85,EJS05,NHK08,PS12,DP11}.

\subparagraph*{Organization.}
\Cref{sec:prelim} introduces notation and proves the lower bound. \Cref{sec:ptas} presents the approximation scheme in $\R^d$, and \cref{sec:doubling} its extension to doubling metrics. \Cref{sec:general} shows the hardness in general metrics. We conclude with open problems in \cref{sec:conclusion}.

\section{Preliminaries}
\label{sec:prelim}

This section fixes notation and collects the simple properties of separated dumbbells that the rest of the paper relies on. A separated dumbbell stays separated when its sides shrink, and all the pairs it covers have almost the same length, with the endpoints on each side close to each other. We then define the compatibility graph, in which \minwspd becomes a clique cover problem, and prove a lower bound on the size of every WSPD.

For a pair $e=\{x_e,y_e\}$, its \emph{scale} is $s(e)=\dist(x_e,y_e)$. We write $\ball(z,R)$ for the closed ball of radius $R$ centered at~$z$. For a set $S$ of pairs, $\theta_\eps(S)$ denotes the minimum number of $\eps$-separated dumbbells that cover all pairs of $S$, so $\opt_\eps=\theta_\eps(\binom P2)$. We use the following two basic facts, which follow directly from the definitions.

\begin{observation}[Heredity]\label{obs:hered}
If $\{X,Y\}$ is $\eps$-separated and $\emptyset\neq X'\subseteq X$, $\emptyset\ne Y'\subseteq Y$, then $\{X',Y'\}$ is $\eps$-separated.
\end{observation}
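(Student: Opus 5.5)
The plan is to compare the two quantities in the separation condition directly: shrinking the sides can only decrease diameters and can only increase the distance between the sides. First, $X'$ and $Y'$ are nonempty by assumption and disjoint because $X'\subseteq X$, $Y'\subseteq Y$, and $X\cap Y=\emptyset$. So $\{X',Y'\}$ is a dumbbell.

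Next comes monotonicity of the diameter. Every pair of points of $X'$ is also a pair of points of $X$, so the maximum over the smaller set satisfies
\[
\diam(X')=\max_{x,x'\in X'}\dist(x,x')\le\max_{x,x'\in X}\dist(x,x')=\diam(X).
\]
The same argument gives $\diam(Y')\le\diam(Y)$.

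Then I use the reverse monotonicity of the set distance. Since $X'\times Y'\subseteq X\times Y$, the minimum over the smaller set of pairs is at least the minimum over the larger one, so $\dist(X',Y')\ge\dist(X,Y)$.

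Combining these bounds with the $\eps$-separation of $\{X,Y\}$ gives
\[
\max\{\diam(X'),\diam(Y')\}\le\max\{\diam(X),\diam(Y)\}\le\eps\cdot\dist(X,Y)\le\eps\cdot\dist(X',Y'),
\]
where the last step uses $\eps>0$. This chain is exactly the $\eps$-separation of $\{X',Y'\}$.

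None of these steps is a real obstacle. The only point that needs care is the direction of the two inequalities: passing to subsets weakens the left-hand side (diameters can only shrink) and strengthens the right-hand side (the distance can only grow). If one of the directions were reversed, the claim would fail.
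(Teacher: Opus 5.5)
Your proof is correct and is exactly the argument the paper has in mind. The paper gives no written proof, saying only that the observation follows directly from the definitions; elsewhere it notes that passing to subsets does not increase diameters and does not decrease distances, which is precisely your chain of inequalities, together with your check that $X'$ and $Y'$ are nonempty and disjoint.
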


\begin{observation}[Locality]\label{obs:local}
Let $\{X,Y\}$ be an $\eps$-separated dumbbell with $\dist(X,Y)=\ell$ that covers a pair $e$. Then $\ell\le s(e)\le (1+2\eps)\ell$. Moreover, $X\subseteq \ball(x,\eps\ell)$ for every $x\in X$, and similarly for $Y$.
\end{observation}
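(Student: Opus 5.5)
The statement to prove is \cref{obs:local}: if $\{X,Y\}$ is $\eps$-separated with $\dist(X,Y)=\ell$ and covers $e$, then $\ell\le s(e)\le(1+2\eps)\ell$, and every point of $X$ lies within $\eps\ell$ of any fixed $x\in X$ (likewise for $Y$). The plan is to unfold the definitions and apply the triangle inequality once, in two steps.

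Write $e=\{x_e,y_e\}$ with $x_e\in X$ and $y_e\in Y$, which is possible because $\{X,Y\}$ covers $e$. \textbf{Lower bound.} Since $\dist(X,Y)$ is the minimum of $\dist(x,y)$ over $x\in X$, $y\in Y$, we get directly $s(e)=\dist(x_e,y_e)\ge\ell$. \textbf{Upper bound.} Pick a closest pair $x^*\in X$, $y^*\in Y$ with $\dist(x^*,y^*)=\ell$; it exists because $P$ is finite. By the triangle inequality,
\[
  s(e)\le \dist(x_e,x^*)+\dist(x^*,y^*)+\dist(y^*,y_e)\le \diam(X)+\ell+\diam(Y).
\]
Separation gives $\diam(X)\le\eps\ell$ and $\diam(Y)\le\eps\ell$, so $s(e)\le(1+2\eps)\ell$.

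\textbf{Ball containment.} For any $x,x'\in X$ we have $\dist(x,x')\le\diam(X)\le\eps\ell$, so $x'\in\ball(x,\eps\ell)$. Hence $X\subseteq\ball(x,\eps\ell)$ for every $x\in X$, and the same argument applies to $Y$.

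There is no real obstacle here. The only care needed is to use the closest pair $(x^*,y^*)$ realizing $\ell$, rather than an arbitrary pair, when bounding $s(e)$ from above.
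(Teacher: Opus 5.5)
Your proposal is correct and follows the paper's proof exactly: the lower bound $s(e)\ge\ell$ comes directly from the definition of $\dist(X,Y)$, the upper bound from the triangle inequality through a closest pair, giving $s(e)\le\diam(X)+\ell+\diam(Y)\le(1+2\eps)\ell$, and the ball containment from $\diam(X)\le\eps\ell$. You only spell out the closest pair, whose existence follows from the finiteness of $P$, which the paper leaves implicit.
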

\begin{proof}
We have $s(e)\ge\ell$ by definition, and $s(e)\le \diam(X)+\ell+\diam(Y)\le(1+2\eps)\ell$ by the triangle inequality. The second claim follows from $\diam(X)\le\eps\ell$.
\end{proof}

Two pairs $a$ and $b$ are \emph{compatible} if some $\eps$-separated dumbbell covers both. By \cref{obs:hered}, $a$ and $b$ are compatible if and only if at least one of the dumbbells $\{\{x_a,x_b\},\{y_a,y_b\}\}$ and $\{\{x_a,y_b\},\{y_a,x_b\}\}$ has disjoint sides and is $\eps$-separated. Hence, compatibility can be tested with $O(1)$ distance computations. The \emph{compatibility graph} $H$ has the pairs as vertices, and an edge between any two compatible pairs. The pairs covered by an $\eps$-separated dumbbell form a clique in $H$.

\subsection{A lower bound for cover WSPDs}
\label{sec:lower}

A minimum spanning tree of $P$ has $n-1$ edges, and we show that no dumbbell covers two of them. Intuitively, if a dumbbell covered two tree edges, then the tree paths between the endpoints on each side would be short, because the sides are small. These two paths and the shorter edge would connect the endpoints of the longer edge without using it, which is impossible in a tree (\cref{fig:mst}).

\begin{figure}[htb]
\centering
\begin{tikzpicture}[font=\small]
\draw[black!40, dashed] (0.012,0.525) circle[radius=0.782];
\node[black!70] at (0.012,-0.507) {$X$};
\draw[black!40, dashed] (5.45,0.575) circle[radius=0.893];
\node[black!70] at (5.45,-0.568) {$Y$};
\draw[wspdorange, very thick, densely dashed] (0.1,1.05) -- (5.3,1.2);
\draw[wspdblue, very thick] (0.1,1.05) -- (0.35,0.55) -- (0,0) -- (5.2,0.15) -- (5.45,0.65) -- (5.3,1.2);
\fill[black] (0,0) circle[radius=1.4pt];
\fill[black] (0.35,0.55) circle[radius=1.4pt];
\fill[black] (0.1,1.05) circle[radius=1.4pt];
\fill[black] (-0.4,0.5) circle[radius=1.4pt];
\fill[black] (5.2,0.15) circle[radius=1.4pt];
\fill[black] (5.45,0.65) circle[radius=1.4pt];
\fill[black] (5.3,1.2) circle[radius=1.4pt];
\fill[black] (5.85,0.3) circle[radius=1.4pt];
\node[below left, inner sep=1.5pt] at (0,0) {$p$};
\node[above left, inner sep=1.5pt] at (0.1,1.05) {$p'$};
\node[below right, inner sep=1.5pt] at (5.2,0.15) {$q$};
\node[above right, inner sep=1.5pt] at (5.3,1.2) {$q'$};
\end{tikzpicture}
\caption{The proof of \cref{lem:mst}. If one dumbbell $\{X,Y\}$ covered two edges $pq$ and $p'q'$ of a minimum spanning tree, with $\dist(p,q)\le\dist(p',q')$, then the tree paths from $p'$ to $p$ and from $q$ to $q'$ would use only edges shorter than $p'q'$. Together with $pq$, they would connect $p'$ and $q'$ (blue, solid) without using $p'q'$ (orange, dashed).}
\label{fig:mst}
\end{figure}

\begin{lemma}\label{lem:mst}
Let $(M,\dist)$ be a metric space, $P\subseteq M$ with $|P|=n$, and $\eps\in(0,1)$. Every $\eps$-WSPD of $P$ has at least $n-1$ dumbbells.
\end{lemma}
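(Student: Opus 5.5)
Fix a minimum spanning tree $T$ of $P$; it has $n-1$ edges, each a pair of $P$. Every $\eps$-WSPD covers all pairs, in particular all edges of $T$. So it suffices to show that no $\eps$-separated dumbbell covers two distinct edges of $T$: then the map sending each tree edge to some dumbbell covering it is injective, and the WSPD has at least $n-1$ dumbbells.

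Suppose, for contradiction, that an $\eps$-separated dumbbell $\{X,Y\}$ with $\ell=\dist(X,Y)$ covers two distinct tree edges $pq$ and $p'q'$, with $p,p'\in X$, $q,q'\in Y$, and $\dist(p,q)\le\dist(p',q')$. The key claim is that the tree path from $p$ to $p'$ uses only edges of length less than $\dist(p',q')$, and likewise for the tree path from $q$ to $q'$. For this I would use the cycle property of minimum spanning trees in its path form. Every edge on the $T$-path between two points $u,v$ has length at most $\dist(u,v)$: otherwise removing the longest such edge and adding $uv$ would yield a lighter spanning tree. Since $p,p'\in X$, we get $\dist(p,p')\le\diam(X)\le\eps\ell<\ell\le\dist(p',q')$, using $\eps<1$ and \cref{obs:local}. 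So every edge on the $T$-path from $p$ to $p'$ is strictly shorter than $p'q'$, and in particular is not $p'q'$. The same holds for the path from $q$ to $q'$. The edge $pq$ is distinct from $p'q'$ by assumption.

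Concatenating the tree path $p'\to p$, the edge $pq$, and the tree path $q\to q'$ gives a walk in $T$ from $p'$ to $q'$ that avoids the edge $p'q'$. Removing $p'q'$ from $T$ disconnects $p'$ from $q'$, so this is a contradiction.

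The main subtlety is ties and degenerate cases, such as $p=p'$ or $q=q'$. These are harmless, since an empty path is fine. It is also why strictness is needed, which is exactly where $\eps<1$ enters: the strict inequality $\eps\ell<\ell$ guarantees that the edge $p'q'$ itself cannot lie on the short connecting paths, even when $T$ has edges of equal length.
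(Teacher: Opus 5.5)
Your proposal is correct and follows the paper's proof essentially step for step: a minimum spanning tree, the bottleneck (path) property, the map from tree edges to dumbbells that you show is injective, and the walk $p'\to p\to q\to q'$ in $T-p'q'$, which contradicts the fact that $p'q'$ is a bridge. The only difference is cosmetic: you bound $\eps\ell<\ell\le\dist(p',q')$ directly, so you never use your assumption $\dist(p,q)\le\dist(p',q')$, whereas the paper passes through $\eps\,\dist(p,q)<\dist(p',q')$.
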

\begin{proof}
Let $T$ be a minimum spanning tree of the complete graph on $P$ with edge weights $\dist$. Recall the \emph{bottleneck property} of minimum spanning trees: for all $u,v\in P$, every edge on the path between $u$ and $v$ in $T$ has weight at most $\dist(u,v)$.

Let $\W$ be an $\eps$-WSPD. Map every edge $pq$ of $T$ to a dumbbell $\{X,Y\}\in\W$ with $p\in X$ and $q\in Y$. Since $T$ has $n-1$ edges, it suffices to show that this map is injective.

Suppose that two distinct edges $pq$ and $p'q'$ of $T$, with $\dist(p,q)\le \dist(p',q')$, are mapped to the same dumbbell $\{X,Y\}$. After relabeling, $p,p'\in X$ and $q,q'\in Y$. Then
\[
  \dist(p,p')\le \diam(X)\le \eps\,\dist(X,Y)\le \eps\,\dist(p,q)<\dist(p',q'),
\]
and similarly $\dist(q,q')<\dist(p',q')$. By the bottleneck property, all edges on the tree paths from $p'$ to $p$ and from $q$ to $q'$ are shorter than $p'q'$, so neither path contains $p'q'$. Together with the edge $pq\ne p'q'$, these paths form a walk from $p'$ to $q'$ in $T-p'q'$, contradicting the fact that every edge of a tree is a bridge.
\end{proof}

The bound of \cref{lem:mst} is attained already on the line: for the points $p_i=(4/\eps)^i$, $i=1,\dots,n$, the $n-1$ dumbbells $\{\{p_1,\dots,p_i\},\{p_{i+1}\}\}$, $i=1,\dots,n-1$, form an $\eps$-WSPD, as for the planar points $(i,(4/\eps)^i)$ of Buchin et al.~\cite{BCHKLPRW26}.

\section{Bicriteria Approximation Scheme}
\label{sec:ptas}

The scheme follows the local-growth approach for covering problems in geometric intersection graphs. Starting from an uncovered pair, it grows a neighborhood in the compatibility graph until one more step of growth barely increases the cost of covering it. It then covers the enlarged neighborhood optimally, removes the covered pairs, and repeats. The few extra dumbbells spent on the outer layer buy a useful property: every pair compatible with the inner neighborhood is removed, so no dumbbell covers pairs of two different inner neighborhoods, and an optimal WSPD pays separately for each of them. For this to give a factor $1+\delta$ in polynomial time, the growth must stop after a bounded number of steps, and each neighborhood must be covered optimally in bounded time. The first property requires that all pairs have comparable lengths, and the second requires a finite set of candidate dumbbells, which is where we relax the separation.

\begin{theorem}\label{thm:ptas}
There is a function $f$ such that the following holds. Let $d\ge1$ and $\delta\in(0,1]$. Given a set $P$ of $n$ points in $\R^d$ and $\eps\in(0,1)$, one can compute in $f(d,\delta)\cdot n^{6}$ time a $(1+2\delta/3)\eps$-WSPD $\W$ of $P$ with $|\W|\le(1+\delta)\opt_\eps$. Moreover, $f(d,\delta)=2^{2^{O(d(1/\delta+\log(d/\delta)))}}$.
\end{theorem}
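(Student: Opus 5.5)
The proof combines three ingredients: shifting over scales, grid rounding inside a band, and local growth in the compatibility graph $H$.

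\textbf{Shifting over scales.} By \cref{obs:local}, a dumbbell covers only pairs whose scales differ by a factor at most $1+2\eps\le3$. We partition the scale axis into geometric classes $[\rho^i,\rho^{i+1})$ with $\rho=3$. We fix $k=\lceil 6/\delta\rceil$ and a shift $j\in\{0,\dots,k-1\}$. We then group consecutive classes into bands of $k$ classes each, with the band boundaries placed at the classes $\equiv j \pmod k$. A dumbbell of an optimal solution meets at most two consecutive classes, so it crosses a band boundary for at most one shift $j$. Summing over $j$ and averaging, some shift loses at most a factor $1+2/k\le1+\delta/3$: every optimal dumbbell crossing a boundary is split into its two restrictions, which is allowed by \cref{obs:hered}. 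We solve each band independently and take the union, and we try all $k$ shifts. Inside a band all scales lie within a ratio $3^{k}=2^{O(1/\delta)}$ of each other, which accounts for the $1/\delta$ in the exponent of $f$.

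\textbf{Rounding to a grid.} Let $s$ be the minimum scale in the band. We overlay a grid of cell size $c\,\delta\eps s/\sqrt d$ with $c$ a small constant. A candidate dumbbell is a pair of sets of grid cells, with each side taken to be the set of all points of $P$ in its cells. Every $\eps$-separated dumbbell $\{X,Y\}$ in the band is contained in the dumbbell formed by the cells meeting $X$ and $Y$. Enlarging each side by at most one cell diameter costs an additive $O(c\delta\eps s)$ in both the diameters and the distance. Since $\dist(X,Y)\ge s/3$, this relaxes the separation to $(1+2\delta/3)\eps$ for $c$ small enough. Hence the rounded candidates, restricted to the band's pairs, have cover cost $\thetacell(S)\le\theta_\eps(S)$ for every set $S$ of pairs in the band. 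Only candidates that are actually $(1+2\delta/3)\eps$-separated are kept.

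\textbf{Local growth.} Repeatedly pick an uncovered pair $e$ and consider the hop-balls $B_r$ of radius $r$ around $e$ in $H$, restricted to the uncovered pairs of the band. A pair within $r$ hops of $e$ has both endpoints within distance $O(r\eps s3^k)$ of $e$'s endpoints. So the pairs of $B_r$ lie in $r^{O(d)}$-size regions of space measured in units of $\eps s$, and $\thetacell(B_r)$ is at most the number of pairs of grid-cell groups, which is $(r3^k d/\delta)^{O(d)}$ by a net-tree type count. Let $r$ be the first radius with $\thetacell(B_{r+1})\le(1+\delta/3)\thetacell(B_r)$. If the ratio exceeded $1+\delta/3$ up to radius $r$, the cost would grow exponentially in $r$ while being bounded polynomially in $r$. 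Hence $r\le r_0=O(\frac{d}{\delta}(1/\delta+\log(d/\delta)))$. We compute an optimal rounded cover of $B_{r+1}$ by brute force over the candidates living in a bounded region, in time $2^{(\text{number of cells})^{O(1)}}$. This is where the doubly exponential $f$ comes from. We add the cover and delete the covered pairs. Every optimal dumbbell meeting $B_r$ covers only pairs in $B_{r+1}$, so the cores $B_r$ of different iterations are pairwise incompatible. Thus $\opt$ restricted to the band is at least $\sum\theta_\eps(B_r)\ge\sum\thetacell(B_r)$, and our cost is at most $(1+\delta/3)\sum\thetacell(B_r)$. Combined with the shifting loss, the total is at most $(1+\delta/3)^2\le1+\delta$ times $\opt_\eps$. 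Computing the balls over $O(n^2)$ pairs, with $O(n^2)$ iterations and neighborhood scans of $O(n^2)$, gives the $n^6$ factor.

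\textbf{Main obstacle.} The hardest step is the polynomial growth bound $\thetacell(B_r)\le \mathrm{poly}(r)$ with constants depending only on $d$ and $\delta$. It must hold independently of $n$ and $\eps$, so it has to come from packing cells rather than points. It requires checking that hop distance in $H$ really controls Euclidean displacement by $O(\eps s)$ per hop. The other point needing care is that the lower bound uses $\theta_\eps$ of the cores while the algorithm pays rounded costs. This is resolved by the inequality $\thetacell\le\theta_\eps$, which holds because every $\eps$-separated dumbbell is dominated by a kept candidate.
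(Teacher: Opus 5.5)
Your proposal is correct and follows the paper's proof essentially step by step: shifting over scales with loss $1+O(1/k)$, grid rounding at cell size $\Theta(\delta\eps s/\sqrt d)$ giving $\thetacell\le\theta_\eps$, and local growth in the compatibility graph of the uncovered band pairs, with pairwise incompatible cores, an exponential-versus-polynomial bound on the radius, and a brute-force local solver. The deviations are cosmetic. Your fixed ratio-$3$ classes replace the paper's levels of ratio $1+2\eps$, which is equally valid since $1+2\eps<3$. Also, no splitting of boundary-crossing dumbbells is needed, and restricting a dumbbell to the pairs of one band need not even yield a dumbbell: $\theta_\eps$ of a band may use a dumbbell that also covers pairs outside the band, so such a dumbbell is simply counted once in each of the two bands it meets.
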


Throughout this section, let $\zeta=\delta/3\le1/3$, $m=\lceil 1/\zeta\rceil$, $\eps'=(1+2\zeta)\eps$, and $\Lambda=(1+2\eps)^m\le 3^m$ (as $\eps<1$). The algorithm has three layers. It partitions the pairs into bands of bounded scale ratio by shifting over scales (\cref{sec:bands}). Inside each band, the dumbbells are rounded to a grid, which makes local subproblems finite and is the only step that relaxes the separation from $\eps$ to $\eps'$ (\cref{sec:rounding}). Finally, each band is covered by local growth in the compatibility graph (\cref{sec:growth}).

\subsection{Shifting over scales}
\label{sec:bands}

By \cref{obs:local}, a dumbbell covers only pairs of almost the same length. We therefore group the pairs by length into bands, each spanning a bounded ratio of lengths, and cover each band separately. A dumbbell of an optimal WSPD that covers pairs of two bands is then paid twice, but this happens only for dumbbells whose pairs straddle a band boundary. As Hochbaum and Maass~\cite{HM85} do with strips, we try $m$ shifted positions of the boundaries. For some of them, only a $1/m$ fraction of the optimal dumbbells is paid twice.

If $n\le1$, the empty family is an $\eps$-WSPD, so we assume $n\ge2$. Let $s_0=\min_e s(e)$ and define the \emph{level} of a pair $e$ as $\lambda(e)=\lfloor\log_{1+2\eps}(s(e)/s_0)\rfloor$. For an offset $o\in\{0,\dots,m-1\}$, the pairs with $\lfloor(\lambda(e)-o)/m\rfloor=j$ form the \emph{band} $E^o_j$. Each band consists of $m$ consecutive levels, so the scales of its pairs lie in $[\sigma_j,\Lambda\sigma_j)$ with $\sigma_j=s_0(1+2\eps)^{jm+o}$. We assume the real RAM model with the floor function and logarithms. This model is too powerful in general~\cite{Sch79}, but we use these operations only to compute levels and to locate grid cells (\cref{sec:rounding}), on quantities computed from the input and never to encode or extract bits. The algorithm processes only the nonempty bands, of which there are at most $\binom n2$ per offset, so its running time does not depend on the spread of $P$.

For each offset $o$ and each nonempty band $j$, the algorithm computes a family $\W^o_j$ of $\eps'$-separated dumbbells covering $E^o_j$ (\cref{sec:growth}). It outputs $\W=\bigcup_j\W^o_j$ for an offset $o$ that minimizes $\sum_j|\W^o_j|$.

\begin{lemma}\label{lem:bands}
There is an offset $o$ with $\sum_j\theta_\eps(E^o_j)\le(1+1/m)\opt_\eps$.
\end{lemma}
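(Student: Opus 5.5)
Fix an optimal $\eps$-WSPD $\W^*$ with $|\W^*|=\opt_\eps$. The plan is a Hochbaum--Maass style averaging argument over the $m$ offsets. For every offset $o$ and every band $j$, the dumbbells of $\W^*$ restricted to $E^o_j$ give a cover of that band. For each $D\in\W^*$ and band $E^o_j$ that $D$ meets, we keep a dumbbell $D'\subseteq D$ covering exactly the pairs of $D$ in $E^o_j$, or simply keep $D$ itself. It remains $\eps$-separated, and covers here only need to include the pairs of the band. Thus $\theta_\eps(E^o_j)$ is at most the number of dumbbells of $\W^*$ that cover at least one pair of $E^o_j$. Summing over $j$,
\[
\sum_j\theta_\eps(E^o_j)\le\sum_{D\in\W^*} b_o(D),
\]
where $b_o(D)$ is the number of bands of offset $o$ containing a pair covered by $D$.

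Next we bound $b_o(D)$ using \cref{obs:local}. Let $\ell=\dist(X,Y)$ for $D=\{X,Y\}$. Every covered pair has scale in $[\ell,(1+2\eps)\ell]$, so its level lies in $\{\lfloor\log_{1+2\eps}(\ell/s_0)\rfloor,\ \lfloor\log_{1+2\eps}(\ell/s_0)\rfloor+1\}$. Hence the levels of pairs covered by $D$ take at most two consecutive values $L$ and $L+1$. Since each band consists of $m$ consecutive levels, $D$ meets two bands of offset $o$ only if a band boundary falls between $L$ and $L+1$, that is, only if $L+1\equiv o\pmod m$. Otherwise $b_o(D)\le1$.

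Finally we average. For each $D$, at most one offset $o\in\{0,\dots,m-1\}$ has $b_o(D)=2$, and all other offsets have $b_o(D)\le 1$. Therefore $\sum_{o}\sum_{D}b_o(D)\le (m+1)\opt_\eps$, and some offset satisfies $\sum_D b_o(D)\le(1+1/m)\opt_\eps$, which proves the lemma.

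The only delicate step is checking that the pairs covered by one dumbbell span at most two consecutive levels when a scale equals exactly $(1+2\eps)\ell$. This holds because the interval $[\ell,(1+2\eps)\ell]$ has ratio exactly the level base, so the floor can increase by at most one across it. In the case $m=1$ no boundary issue arises beyond the same bound, since $1+1/m=2$ still follows.
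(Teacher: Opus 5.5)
Your proof is correct and follows essentially the same route as the paper's proof: you charge each band to the dumbbells of an optimal WSPD that meet it, use \cref{obs:local} to show that each dumbbell spans at most two consecutive levels, so it meets two bands only for the offset $o\equiv L+1\pmod m$, and then average over the $m$ offsets. One small caveat: replacing $D$ by a sub-dumbbell that covers exactly its pairs in $E^o_j$ is not always possible, because those pairs need not form a product set, but keeping $D$ itself, which you also propose, is all the argument needs.
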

\begin{proof}
Fix a minimum $\eps$-WSPD $\W^*$ and an offset $o$. For every band $E^o_j$, the dumbbells of $\W^*$ that cover at least one pair of $E^o_j$ cover all of $E^o_j$. Therefore $\sum_j\theta_\eps(E^o_j)\le\sum_{Z\in\W^*}\beta^o(Z)$, where $\beta^o(Z)$ is the number of bands (for offset $o$) that contain a pair covered by $Z$.

By \cref{obs:local}, the pairs covered by $Z$ have scales in $[\ell,(1+2\eps)\ell]$, so their levels take at most two consecutive values $\lambda_0$ and $\lambda_0+1$. Hence $\beta^o(Z)\le 2$, and $\beta^o(Z)=2$ only for the single offset $o\equiv\lambda_0+1\pmod m$ (\cref{fig:bands}). Averaging over the $m$ offsets gives $\frac1m\sum_o\sum_Z\beta^o(Z)\le (1+1/m)|\W^*|$, so some offset $o$ satisfies $\sum_Z\beta^o(Z)\le(1+1/m)\opt_\eps$.
\end{proof}

\begin{figure}[htb]
\centering
\begin{tikzpicture}[font=\small]
\filldraw[fill=black!12, draw=black!55] (0,0) rectangle (1.8,0.45);
\filldraw[fill=white, draw=black!55] (1.8,0) rectangle (3.6,0.45);
\filldraw[fill=black!12, draw=black!55] (3.6,0) rectangle (5.4,0.45);
\filldraw[fill=white, draw=black!55] (5.4,0) rectangle (7.2,0.45);
\node[left] at (0,0.225) {$o=0$};
\draw[wspdorange, line width=2pt] (5.4,0) -- (5.4,0.45);
\filldraw[fill=white, draw=black!55] (0,-0.6) rectangle (0.6,-0.15);
\filldraw[fill=black!12, draw=black!55] (0.6,-0.6) rectangle (2.4,-0.15);
\filldraw[fill=white, draw=black!55] (2.4,-0.6) rectangle (4.2,-0.15);
\filldraw[fill=black!12, draw=black!55] (4.2,-0.6) rectangle (6,-0.15);
\filldraw[fill=white, draw=black!55] (6,-0.6) rectangle (7.2,-0.15);
\node[left] at (0,-0.375) {$o=1$};
\filldraw[fill=white, draw=black!55] (0,-1.2) rectangle (1.2,-0.75);
\filldraw[fill=black!12, draw=black!55] (1.2,-1.2) rectangle (3,-0.75);
\filldraw[fill=white, draw=black!55] (3,-1.2) rectangle (4.8,-0.75);
\filldraw[fill=black!12, draw=black!55] (4.8,-1.2) rectangle (6.6,-0.75);
\filldraw[fill=white, draw=black!55] (6.6,-1.2) rectangle (7.2,-0.75);
\node[left] at (0,-0.975) {$o=2$};
\draw[wspdorange, line width=2pt] (3,-1.2) -- (3,-0.75);
\fill[wspdorange, opacity=0.25] (2.46,-1.28) rectangle (3.54,0.53);
\node[above] at (3,0.5) {$Z$};
\fill[wspdorange, opacity=0.25] (4.86,-1.28) rectangle (5.94,0.53);
\node[above] at (5.4,0.5) {$Z'$};
\node[below, black!70] at (0.3,-1.25) {\scriptsize $0$};
\node[below, black!70] at (0.9,-1.25) {\scriptsize $1$};
\node[below, black!70] at (1.5,-1.25) {\scriptsize $2$};
\node[below, black!70] at (2.1,-1.25) {\scriptsize $3$};
\node[below, black!70] at (2.7,-1.25) {\scriptsize $4$};
\node[below, black!70] at (3.3,-1.25) {\scriptsize $5$};
\node[below, black!70] at (3.9,-1.25) {\scriptsize $6$};
\node[below, black!70] at (4.5,-1.25) {\scriptsize $7$};
\node[below, black!70] at (5.1,-1.25) {\scriptsize $8$};
\node[below, black!70] at (5.7,-1.25) {\scriptsize $9$};
\node[below, black!70] at (6.3,-1.25) {\scriptsize $10$};
\node[below, black!70] at (6.9,-1.25) {\scriptsize $11$};
\node[below] at (3.6,-1.6) {level $\lambda(e)$};
\end{tikzpicture}
\caption{Bands of $m=3$ consecutive levels for the offsets $o=0,1,2$. The pairs covered by a dumbbell have at most two consecutive levels, so the dumbbell meets two bands for at most one offset (thick boundary): $o=2$ for $Z$ and $o=0$ for $Z'$.}
\label{fig:bands}
\end{figure}

\subsection{Rounding}
\label{sec:rounding}

The neighborhoods grown in \cref{sec:growth} are geometrically small, but they may contain arbitrarily many points, so the number of dumbbells that can cover them is unbounded. To solve them by brute force, we round every dumbbell to a grid whose cells are much smaller than $\eps$ times the scale of the band: each side is replaced by the union of the cells it meets. Points in the same cell then become interchangeable, and a neighborhood is described by the bounded number of cells it meets. Rounding enlarges the sides by a tiny amount, so it increases their diameters and decreases their distance only slightly. This relaxes the separation from $\eps$ to $\eps'$, but it never increases the number of dumbbells.

Fix a band with scales in $[\sigma,\Lambda\sigma)$ and consider the axis-parallel grid of half-open cubes $\prod_{i=1}^d[k_i\eta,(k_i+1)\eta)$, $k_i\in\mathbb{Z}$, of side
\[
  \eta=\frac{\zeta\,\eps\,\sigma}{12\sqrt d}.
\]
Every point of $\R^d$ lies in exactly one cell. A \emph{cell dumbbell} is a set $\{A,B\}$ of two nonempty finite sets of grid cells such that the regions $\bigcup A$ and $\bigcup B$ satisfy
\[
  \max\{\diam(\textstyle\bigcup A),\diam(\bigcup B)\}\le \eps'\,\dist(\bigcup A,\bigcup B).
\]
Here the diameter of a region and the distance between two regions are those of their closures. It covers a pair $e$ if one endpoint of $e$ lies in $\bigcup A$ and the other in $\bigcup B$. If both sides contain points of $P$, the dumbbell $\{P\cap\bigcup A,\,P\cap\bigcup B\}$ is $\eps'$-separated, because taking subsets does not increase diameters and does not decrease distances. We identify a cell dumbbell with this dumbbell (\cref{fig:round}). For a set $S$ of pairs of the band, $\thetacell(S)$ denotes the minimum number of cell dumbbells covering $S$.

\begin{figure}[htb]
\centering
\begin{tikzpicture}[font=\small]
\draw[black!12] (-1.2,-1.2) grid[step=0.4] (8.4,2);
\fill[wspdblue!22] (-0.8,0.4) rectangle (-0.4,0.8);
\draw[wspdblue!70] (-0.8,0.4) rectangle (-0.4,0.8);
\fill[wspdblue!22] (-0.4,-0.4) rectangle (0,0);
\draw[wspdblue!70] (-0.4,-0.4) rectangle (0,0);
\fill[wspdblue!22] (0,0) rectangle (0.4,0.4);
\draw[wspdblue!70] (0,0) rectangle (0.4,0.4);
\fill[wspdblue!22] (0.4,-0.8) rectangle (0.8,-0.4);
\draw[wspdblue!70] (0.4,-0.8) rectangle (0.8,-0.4);
\fill[wspdblue!22] (0.4,0.8) rectangle (0.8,1.2);
\draw[wspdblue!70] (0.4,0.8) rectangle (0.8,1.2);
\node[wspdblue, above] at (0.6,1.2) {$A$};
\fill[wspdblue!22] (6,0) rectangle (6.4,0.4);
\draw[wspdblue!70] (6,0) rectangle (6.4,0.4);
\fill[wspdblue!22] (6,0.8) rectangle (6.4,1.2);
\draw[wspdblue!70] (6,0.8) rectangle (6.4,1.2);
\fill[wspdblue!22] (6.4,0.4) rectangle (6.8,0.8);
\draw[wspdblue!70] (6.4,0.4) rectangle (6.8,0.8);
\fill[wspdblue!22] (6.8,-0.4) rectangle (7.2,0);
\draw[wspdblue!70] (6.8,-0.4) rectangle (7.2,0);
\fill[wspdblue!22] (7.2,0.4) rectangle (7.6,0.8);
\draw[wspdblue!70] (7.2,0.4) rectangle (7.6,0.8);
\node[wspdblue, above] at (6.2,1.2) {$B$};
\draw[black!55, dashed] (0.7,1.05) -- node[pos=0.5, above, black] {$\ell$} (6.05,0.2);
\draw[wspdorange, thick] (0.7,1.05) -- node[pos=0.5, above] {$a$} (6.3,1.15);
\draw[wspdorange, thick] (-0.3,-0.25) -- node[pos=0.5, below] {$b$} (6.9,-0.25);
\fill[black] (0.25,0.3) circle[radius=1.4pt];
\fill[black] (-0.45,0.75) circle[radius=1.4pt];
\fill[black] (0.7,1.05) circle[radius=1.4pt];
\fill[black] (-0.3,-0.25) circle[radius=1.4pt];
\fill[black] (0.5,-0.45) circle[radius=1.4pt];
\fill[black] (6.05,0.2) circle[radius=1.4pt];
\fill[black] (6.75,0.65) circle[radius=1.4pt];
\fill[black] (6.3,1.15) circle[radius=1.4pt];
\fill[black] (6.9,-0.25) circle[radius=1.4pt];
\fill[black] (7.35,0.45) circle[radius=1.4pt];
\node[below] at (0.14,-1.15) {$X$};
\node[below] at (6.67,-1.15) {$Y$};
\end{tikzpicture}
\caption{A dumbbell $\{X,Y\}$ covering the pairs $a$ and $b$, and its cell dumbbell $\{A,B\}$.}
\label{fig:round}
\end{figure}

\begin{lemma}\label{lem:round}
Let $\{X,Y\}$ be an $\eps$-separated dumbbell covering a pair $e$ with $s(e)\ge\sigma$, and let $A$ and $B$ be the sets of cells meeting $X$ and $Y$. Then $\{A,B\}$ is a cell dumbbell that covers every pair covered by $\{X,Y\}$. In particular, $\thetacell(S)\le\theta_\eps(S)$ for every set $S$ of pairs of the band.
\end{lemma}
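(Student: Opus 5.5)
Let $\ell=\dist(X,Y)$. The plan is to compare the regions $\bigcup A$ and $\bigcup B$ with $X$ and $Y$ through the cell diameter, and then absorb the resulting additive errors into the multiplicative slack $\eps'=(1+2\zeta)\eps$ using the lower bound on $\ell$ from \cref{obs:local}.

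First, the easy part. Each cell is a half-open cube of side $\eta$, so its closure has diameter $\sqrt d\,\eta=\zeta\eps\sigma/12$. Every point of the closure of $\bigcup A$ is within $\sqrt d\,\eta$ of some point of $X$, since each cell of $A$ meets $X$. Likewise for $\bigcup B$ and $Y$. Every point lies in exactly one cell. Hence the cells meeting $X$ and those meeting $Y$ form finite nonempty sets, and every pair $\{x,y\}$ with $x\in X$, $y\in Y$ has $x\in\bigcup A$ and $y\in\bigcup B$. So $\{A,B\}$ covers every pair that $\{X,Y\}$ covers.

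Next, the separation. By the triangle inequality,
\[
\diam\Bigl(\textstyle\bigcup A\Bigr)\le\diam(X)+2\sqrt d\,\eta\le\eps\ell+\zeta\eps\sigma/6,
\]
and the same bound holds for $\bigcup B$. For the distance,
\[
\dist\Bigl(\textstyle\bigcup A,\bigcup B\Bigr)\ge\ell-2\sqrt d\,\eta=\ell-\zeta\eps\sigma/6.
\]
\Cref{obs:local} gives $s(e)\le(1+2\eps)\ell<3\ell$, so $\sigma\le s(e)<3\ell$. Therefore both error terms are at most $\zeta\eps\ell/2$. It then suffices to check
\[
\eps\ell(1+\zeta/2)\le(1+2\zeta)\eps\,\ell\,(1-\zeta\eps/2).
\]
This reduces to $1+\zeta/2\le(1+2\zeta)(1-\zeta/2)$, using $\eps<1$. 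The right-hand side is $1+\tfrac32\zeta-\zeta^2$, so the inequality holds because $\zeta\le1/3<1$. This is the one step needing care: the constant $12$ in $\eta$ must leave enough room. With the slack above it does, though the constants are close to tight.

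The disjointness of the sides is automatic. The inequality gives $\dist(\bigcup A,\bigcup B)>0$, because $\ell>0$ and the error terms are small. So no cell lies in both $A$ and $B$, and $\{A,B\}$ is a cell dumbbell.

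For the consequence, take an optimal $\eps$-separated cover of $S$ with $\theta_\eps(S)$ dumbbells. Discard any dumbbell that covers no pair of $S$. Every remaining dumbbell covers a pair of the band, whose scale is at least $\sigma$. The main part then replaces each remaining dumbbell by a cell dumbbell covering a superset of its pairs. This gives $\thetacell(S)\le\theta_\eps(S)$.
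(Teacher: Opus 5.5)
Your proof is correct and follows the paper's argument essentially step by step. You bound the additive error $2\sqrt d\,\eta$ by $\zeta\eps\ell/2$ via $\sigma<3\ell$, and absorb it into the separation using $1+\zeta/2\le(1+2\zeta)(1-\zeta/2)$. Your explicit handling of closures, of disjointness of the sides, and of discarding dumbbells that cover no pair of $S$ only spells out steps the paper leaves implicit. Note also that this inequality has slack $\zeta(1-\zeta)$, so it is not close to tight.
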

\begin{proof}
Let $\ell=\dist(X,Y)$. By \cref{obs:local}, $\ell\ge\sigma/(1+2\eps)\ge\sigma/3$, so $\eta\sqrt d\le\zeta\eps\ell/4$. Every point of $\bigcup A$ is within distance $\eta\sqrt d$ of a point of $X$. Therefore
\[
  \diam(\textstyle\bigcup A)\le\eps\ell+2\eta\sqrt d\le\eps\ell(1+\zeta/2)
  \quad\text{and}\quad
  \dist(\textstyle\bigcup A,\bigcup B)\ge\ell-2\eta\sqrt d\ge\ell(1-\zeta/2),
\]
and the same holds for $B$. Since $1+\zeta/2\le(1+2\zeta)(1-\zeta/2)$ for $\zeta\le1/2$, the dumbbell $\{A,B\}$ is $\eps'$-separated. Finally, $X\subseteq\bigcup A$ and $Y\subseteq\bigcup B$, so $\{A,B\}$ covers every pair covered by $\{X,Y\}$.
\end{proof}

\begin{lemma}\label{lem:grid}
Let $\rho\ge1$, let $u,w\in\R^d$, and let $S$ be a set of pairs of the band whose endpoints lie in $\ball(u,R)\cup\ball(w,R)$, where $R=\rho\,\eps\Lambda\sigma$. Let $Q$ be the set of cells that contain an endpoint of a pair of $S$. Then $|Q|\le K=2(24\sqrt d\,\rho\Lambda/\zeta+2)^d$, and:
\begin{enumerate}
\item $\thetacell(S)\le K^2$;
\item $\thetacell(S)$ and an optimal family of cell dumbbells can be computed in $g(d,\zeta,\rho)+O(|S|\log|S|)$ time, for some function $g$ that does not depend on $n$ or $\eps$.
\end{enumerate}
\end{lemma}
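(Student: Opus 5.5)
The proof has three parts: a counting bound on $|Q|$, an explicit family of $K^2$ cell dumbbells for item 1, and a brute-force search over cell dumbbells built from $Q$ for item 2.

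\emph{Bounding $|Q|$.} Every endpoint lies in $\ball(u,R)$ or $\ball(w,R)$. A cell meeting $\ball(u,R)$ has index coordinates in an interval of at most $2R/\eta+2$ consecutive integers per axis. Substituting $\eta=\zeta\eps\sigma/(12\sqrt d)$ and $R=\rho\eps\Lambda\sigma$ gives $2R/\eta=24\sqrt d\,\rho\Lambda/\zeta$, with $\eps$ cancelling. Hence each ball meets at most $(24\sqrt d\,\rho\Lambda/\zeta+2)^d$ cells, and the two balls together meet at most $K$ cells.

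\emph{Item 1.} Cover each pair $e\in S$ by the cell dumbbell $\{\{C\},\{C'\}\}$ formed by the cells $C,C'$ containing its endpoints. There are at most $\binom{K}{2}+K\le K^2$ ordered choices of cells, so it suffices to show these singleton dumbbells are $\eps'$-separated. Each cell has diameter $\eta\sqrt d=\zeta\eps\sigma/12$. Since $s(e)\ge\sigma$, the distance between the two closed cells is at least $\sigma-2\eta\sqrt d\ge\sigma(1-\zeta\eps/6)$. The ratio is then at most $(\zeta\eps/12)/(1-\zeta\eps/6)\le\eps\le\eps'$. This also shows $C\ne C'$.

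\emph{Item 2.} Call two pairs equivalent if they have the same unordered pair of endpoint cells. A cell dumbbell covering one pair of a class covers the whole class, so only the at most $K^2$ classes matter. Computing the class of each pair of $S$ takes one floor operation per coordinate, and grouping by sorting costs $O(|S|\log|S|)$. Next, I restrict to cell dumbbells whose sides are subsets of $Q$. Given an optimal family, intersect each side with $Q$ and drop any dumbbell with an empty side. By heredity applied to regions, the shrunken regions remain $\eps'$-separated. Coverage of $S$ is preserved, because every endpoint cell of a pair of $S$ lies in $Q$. Now enumerate all candidate dumbbells $\{A,B\}$ with $A,B\subseteq Q$ disjoint, at most $3^K$ of them. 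For each, test $\eps'$-separation exactly: the diameter of a union of closed cubes is a maximum over pairs of cube vertices, and the distance between two unions is a minimum of box-to-box distances. Both take time bounded in $K$ and $d$ on the real RAM. Finally, solve set cover over the at most $K^2$ classes with these candidates by trying subfamilies of size $k=1,2,\dots,K^2$; by item 1 the search stops. This costs at most $2^{O(3^K\cdot K^2)}$, or more carefully $\binom{3^K}{\le K^2}$, a function $g(d,\zeta,\rho)$ since $\Lambda\le3^m$ with $m=\lceil1/\zeta\rceil$.

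The main obstacle is the exact computation of $\diam$ and $\dist$ for unions of closed cells. There, the cubes' vertex structure makes both quantities attained at finitely many candidates, namely vertices for the diameter and axis-wise box distances for the distance.
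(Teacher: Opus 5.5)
Your proof is correct and follows the paper's argument: counting cells per ball, using singleton cell dumbbells for item~1, restricting sides to $Q$, reducing $S$ to at most $K^2$ classes of cell pairs by sorting, and searching exhaustively over families of at most $K^2$ candidates. The only difference is that you enumerate all $\le 3^K$ pairs of subsets of $Q$, as the paper itself does in the doubling setting (\cref{lem:local-net}), whereas the Euclidean proof first confines each side to the $K_0=(48\sqrt d\,\Lambda/\zeta+3)^d$ cells around any one of its cells, using $\diam(\bigcup A)\le2\eps\Lambda\sigma$; both give a function of $d,\zeta,\rho$, and your careful count $\binom{3^K}{\le K^2}=2^{O(K^3)}$, rather than the loose $2^{O(3^K K^2)}$, is the one that preserves the double-exponential bound of \cref{thm:ptas}.
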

\begin{proof}
A ball of radius $R$ meets at most $(2R/\eta+2)^d$ cells, which gives the bound on $|Q|$.

(1) For a pair $e\in S$ with endpoints in cells $c$ and $c'$, the set $\{\{c\},\{c'\}\}$ covers $e$. It is a cell dumbbell because $\diam(c)=\eta\sqrt d\le\zeta\eps\sigma/12$ and $\dist(c,c')\ge\sigma-2\eta\sqrt d\ge\sigma/2$ (in particular, $c\ne c'$).

(2) A cell dumbbell that covers no pair of $S$ can be dropped. If a cell dumbbell $\{A,B\}$ covers a pair of $S$, then $A\cap Q$ and $B\cap Q$ are nonempty, and $\{A\cap Q,B\cap Q\}$ is a cell dumbbell that covers the same pairs of $S$, because shrinking the regions does not increase diameters and does not decrease distances. So we may assume $A,B\subseteq Q$. Every cell dumbbell covering a pair of $S$ satisfies $\dist(\bigcup A,\bigcup B)\le\Lambda\sigma$, hence $\diam(\bigcup A)\le\eps'\Lambda\sigma\le2\eps\Lambda\sigma$. Therefore, if $c\in A$, then every cell of $A$ lies in the cube of $(2D/\eta+3)^d$ cells centered at $c$, where $D=2\eps\Lambda\sigma$, and so $A$ is contained in a set of at most $K_0=(48\sqrt d\,\Lambda/\zeta+3)^d$ cells determined by any one of its cells. There are at most $K\cdot2^{K_0}$ choices for each side. The diameter of a union of cells and the distance between two unions of cells are attained by pairs of cells and can be computed from their corners, so testing a candidate takes time polynomial in $K_0$. Whether a cell dumbbell covers a pair of $S$ depends only on the cells that contain the endpoints of the pair, so by sorting in $O(|S|\log|S|)$ time we replace $S$ by its set of at most $K^2$ distinct pairs of cells. By~(1), an optimal solution uses at most $K^2$ cell dumbbells, and exhaustive search over families of at most $K^2$ candidates takes time bounded by a function of $d$, $\zeta$, and $\rho$ only. Finally, the scale $\eps\sigma$ cancels in $R/\eta$ and $D/\eta$, and $\Lambda\le3^m$, so after replacing $\Lambda$ by $3^m$ none of these bounds depend on $\eps$.
\end{proof}

\subsection{Local growth inside a band}
\label{sec:growth}

We now cover a single band. The algorithm grows a ball around an uncovered pair in the compatibility graph, one hop at a time, as long as each hop multiplies the optimal cover size of the ball by more than $1+\zeta$. The endpoints of the pairs in a ball of radius $r$ lie within distance $r\eps\Lambda\sigma$ of the endpoints of the starting pair, where $\sigma$ is the scale of the band, so after rounding, the optimal cover size grows only polynomially with $r$, and the exponential growth must stop after a bounded number of hops. The last ball, the core, is then covered together with one more hop, which costs at most a factor $1+\zeta$ more than covering the core alone, and the pairs covered are removed. In particular, every pair compatible with the core is removed, so no dumbbell covers pairs of two different cores, and an optimal cover pays for each core separately.

Fix a band $E=E^o_j$ with scales in $[\sigma,\Lambda\sigma)$. The algorithm maintains the set $U\subseteq E$ of uncovered pairs, initially $U=E$. Let $H[U]$ be the compatibility graph induced by $U$, and let $N^r(v)$ be the set of pairs at hop distance at most $r$ from $v$ in $H[U]$. While $U\ne\emptyset$, the algorithm performs a \emph{round} consisting of the following steps:
\begin{enumerate}
\item pick any $v\in U$;
\item let $r\ge0$ be the smallest integer with $\thetacell(N^{r+1}(v))\le(1+\zeta)\,\thetacell(N^r(v))$;
\item add to $\W^o_j$ a minimum family of cell dumbbells covering $N^{r+1}(v)$, and remove from $U$ all pairs that it covers.
\end{enumerate}
The integer $r$ in step~2 exists by \cref{lem:radius} below. We call $N^r(v)$ the \emph{core} of the round. In round $i$, we write $U_i$ for the set $U$ at the start of the round, and $v_i$, $r_i$, and $C_i=N^{r_i}(v_i)$ for the objects chosen in the round. All pairs of $N^{r_i+1}(v_i)$ are removed from $U$ in round $i$.

\begin{lemma}\label{lem:cores}
Let $C_1,C_2,\dots$ be the cores in the order of the rounds. Then $\sum_i\thetacell(C_i)\le\theta_\eps(E)$, and consequently $|\W^o_j|\le(1+\zeta)\,\theta_\eps(E)$.
\end{lemma}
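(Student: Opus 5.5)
Fix a minimum $\eps$-separated cover $\W^*_E$ of $E$, so $|\W^*_E|=\theta_\eps(E)$. The plan is to charge each core $C_i$ to a disjoint subfamily of $\W^*_E$. To this end, let $\W^*_i$ be the set of dumbbells of $\W^*_E$ that cover at least one pair of $C_i$. These dumbbells cover $C_i$, so \cref{lem:round} gives $\thetacell(C_i)\le\theta_\eps(C_i)\le|\W^*_i|$. The first claim then follows once we show that the families $\W^*_i$ are pairwise disjoint, because then $\sum_i\thetacell(C_i)\le\sum_i|\W^*_i|\le\theta_\eps(E)$.

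The key step, and the only real obstacle, is this disjointness. Suppose some $Z\in\W^*_E$ covers a pair $a\in C_i$ and a pair $b\in C_k$ with $i<k$. Since $a$ and $b$ are both covered by the $\eps$-separated dumbbell $Z$, they are compatible, so they are adjacent in $H$. The pair $b$ lies in $C_k\subseteq U_k\subseteq U_i$, so $b$ was still uncovered at the start of round $i$. Hence the edge $ab$ is present in $H[U_i]$. Since $a\in N^{r_i}(v_i)$ in $H[U_i]$, this gives $b\in N^{r_i+1}(v_i)$. But all pairs of $N^{r_i+1}(v_i)$ are removed in round $i$, because the family added there covers $N^{r_i+1}(v_i)$. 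So $b\notin U_k$, a contradiction.

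The point that needs care here is that hop distances are taken in the induced graph $H[U_i]$ and not in $H$. This is why we need $b\in U_i$, which follows from the monotone shrinking of $U$. We also need that cell dumbbells remove every pair of $N^{r_i+1}(v_i)$. This holds by step~3, which explicitly covers that set, even though cell dumbbells may cover more.

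For the consequence, round $i$ adds $\thetacell(N^{r_i+1}(v_i))\le(1+\zeta)\thetacell(C_i)$ dumbbells, by the choice of $r_i$ in step~2. Summing over the rounds gives $|\W^o_j|\le(1+\zeta)\sum_i\thetacell(C_i)\le(1+\zeta)\theta_\eps(E)$.
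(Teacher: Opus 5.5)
Your proposal is correct and follows the paper's proof essentially step by step. Like the paper, you show that no dumbbell of an optimal cover meets two cores, using adjacency in $H[U_i]$ and $U_k\subseteq U_i$. You then bound each core by \cref{lem:round} and sum the per-round cost $\thetacell(N^{r_i+1}(v_i))\le(1+\zeta)\,\thetacell(C_i)$ given by the stopping rule of step~2.
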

\begin{proof}
Let $\W^*$ be a minimum family of $\eps$-separated dumbbells covering $E$. We claim that no dumbbell $Z\in\W^*$ covers pairs $a\in C_i$ and $b\in C_k$ with $i<k$. Suppose it does (\cref{fig:cores}). Then $a$ and $b$ are compatible. Since $b\in U_k\subseteq U_i$ and $a\in C_i\subseteq U_i$, the pairs $a$ and $b$ are adjacent in $H[U_i]$. Hence $b\in N^{r_i+1}(v_i)$, so $b$ was removed from $U$ in round $i<k$, a contradiction.

Therefore, the sets $\W^*_i=\{Z\in\W^*: Z\text{ covers a pair of }C_i\}$ are pairwise disjoint. Since $\W^*_i$ covers $C_i$, \cref{lem:round} gives $|\W^*_i|\ge\theta_\eps(C_i)\ge\thetacell(C_i)$. Summing over $i$ gives the first claim. For the second claim, round $i$ adds $\thetacell(N^{r_i+1}(v_i))\le(1+\zeta)\,\thetacell(C_i)$ dumbbells to $\W^o_j$, and summing over the rounds gives the bound.
\end{proof}

\begin{figure}[htb]
\centering
\begin{tikzpicture}[font=\small]
\fill[wspdblue!10] (0,0) circle[radius=1.95];
\fill[wspdblue!28] (0,0) circle[radius=1.08];
\fill[wspdorange!25] (6.4,0) circle[radius=1.02];
\draw[black!16, line width=11pt, line cap=round, line join=round] (0.575,0.482) -- (1.526,0.269) -- (0.401,1.497) -- cycle;
\fill[black!16] (0.575,0.482) -- (1.526,0.269) -- (0.401,1.497) -- cycle;
\node[black!80, above right, inner sep=5pt] at (1.526,0.269) {$Z$};
\draw[black!45] (0,0) -- (0.575,0.482);
\draw[black!45] (0,0) -- (-0.575,0.482);
\draw[black!45] (0,0) -- (-0.575,-0.482);
\draw[black!45] (0,0) -- (0.575,-0.482);
\draw[black!45] (0.575,0.482) -- (-0.575,0.482);
\draw[black!45] (-0.575,-0.482) -- (0.575,-0.482);
\draw[black!45] (0.575,0.482) -- (1.526,0.269);
\draw[black!45] (0.575,0.482) -- (0.401,1.497);
\draw[black!45] (-0.575,0.482) -- (0.401,1.497);
\draw[black!45] (-0.575,0.482) -- (-0.775,1.342);
\draw[black!45] (-0.575,0.482) -- (-1.55,0);
\draw[black!45] (-0.575,-0.482) -- (-1.55,0);
\draw[black!45] (-0.575,-0.482) -- (-0.655,-1.405);
\draw[black!45] (0.575,-0.482) -- (-0.655,-1.405);
\draw[black!45] (0.575,-0.482) -- (0.775,-1.342);
\draw[black!45] (0.575,-0.482) -- (1.526,0.269);
\draw[black!45] (1.526,0.269) -- (0.401,1.497);
\draw[black!45] (-0.655,-1.405) -- (0.775,-1.342);
\draw[black!45] (1.526,0.269) -- (3.15,0.55);
\draw[black!45] (0.775,-1.342) -- (3.25,-0.45);
\draw[black!45] (3.15,0.55) -- (3.25,-0.45);
\draw[black!45] (3.15,0.55) -- (5.742,0.239);
\draw[black!45] (3.25,-0.45) -- (5.742,-0.239);
\draw[black!45] (6.4,0) -- (6.75,0.606);
\draw[black!45] (6.4,0) -- (5.742,0.239);
\draw[black!45] (6.4,0) -- (5.742,-0.239);
\draw[black!45] (6.4,0) -- (6.75,-0.606);
\draw[black!45] (6.75,0.606) -- (5.742,0.239);
\draw[black!45] (5.742,-0.239) -- (6.75,-0.606);
\fill[black] (0,0) circle[radius=1.4pt];
\fill[black] (6.4,0) circle[radius=1.4pt];
\fill[black] (0.575,0.482) circle[radius=1.4pt];
\fill[black] (-0.575,0.482) circle[radius=1.4pt];
\fill[black] (-0.575,-0.482) circle[radius=1.4pt];
\fill[black] (0.575,-0.482) circle[radius=1.4pt];
\fill[black] (1.526,0.269) circle[radius=1.4pt];
\fill[black] (0.401,1.497) circle[radius=1.4pt];
\fill[black] (-0.775,1.342) circle[radius=1.4pt];
\fill[black] (-1.55,0) circle[radius=1.4pt];
\fill[black] (-0.655,-1.405) circle[radius=1.4pt];
\fill[black] (0.775,-1.342) circle[radius=1.4pt];
\fill[black] (3.15,0.55) circle[radius=1.4pt];
\fill[black] (3.25,-0.45) circle[radius=1.4pt];
\fill[black] (6.75,0.606) circle[radius=1.4pt];
\fill[black] (5.742,0.239) circle[radius=1.4pt];
\fill[black] (5.742,-0.239) circle[radius=1.4pt];
\fill[black] (6.75,-0.606) circle[radius=1.4pt];
\node[below, inner sep=3pt] at (0,0) {$v_i$};
\node[right, inner sep=3pt] at (6.4,0) {$v_k$};
\node[below right, inner sep=1.5pt] at (0.575,0.482) {$a$};
\node[wspdblue!80!black] at (0,-0.8) {$C_i$};
\node[wspdblue!80!black, below] at (0,-1.95) {$N^{r_i+1}(v_i)$};
\node[wspdorange!80!black, below] at (6.4,-1.02) {$C_k$};
\end{tikzpicture}
\caption{Round $i$ in $H[U_i]$ (schematic, $r_i=1$). The pairs of the core $C_i$ (dark) and of $N^{r_i+1}(v_i)$ (light) are removed. The pairs covered by a dumbbell $Z$ form a clique of $H$, so if $Z$ covers a pair $a\in C_i$, then all its pairs in $U_i$ lie in $N^{r_i+1}(v_i)$. Hence no dumbbell covers pairs of $C_i$ and of a later core $C_k$.}
\label{fig:cores}
\end{figure}

\begin{figure}[htb]
\centering
\begin{tikzpicture}[font=\small]
\draw[black!40, dashed] (0,0) circle[radius=0.9];
\draw[black!40, dashed] (0,0) circle[radius=1.8];
\draw[black!40, dashed] (7,0.8) circle[radius=0.9];
\draw[black!40, dashed] (7,0.8) circle[radius=1.8];
\draw[black!55, ->, >=stealth] (0,0) -- (-0.779,0.45);
\node[black!70, left] at (-0.779,0.45) {$\eps\Lambda\sigma$};
\draw[black!55, ->, >=stealth] (0,0) -- (-1.559,-0.9);
\node[black!70, left] at (-1.559,-0.9) {$2\eps\Lambda\sigma$};
\draw[wspdorange, thick, densely dotted] (1.05,1.05) -- (5.95,1.9);
\fill[wspdorange] (1.05,1.05) circle[radius=1.4pt];
\fill[wspdorange] (5.95,1.9) circle[radius=1.4pt];
\draw[wspdorange, thick, densely dotted] (-0.7,-1.45) -- (8.3,-0.3);
\fill[wspdorange] (-0.7,-1.45) circle[radius=1.4pt];
\fill[wspdorange] (8.3,-0.3) circle[radius=1.4pt];
\draw[wspdorange, thick, densely dotted] (0.95,-0.95) -- (6.6,-0.6);
\fill[wspdorange] (0.95,-0.95) circle[radius=1.4pt];
\fill[wspdorange] (6.6,-0.6) circle[radius=1.4pt];
\draw[wspdblue, thick] (0.45,0.4) -- (6.55,1.35);
\fill[wspdblue] (0.45,0.4) circle[radius=1.4pt];
\fill[wspdblue] (6.55,1.35) circle[radius=1.4pt];
\draw[wspdblue, thick] (-0.35,-0.55) -- (7.55,0.35);
\fill[wspdblue] (-0.35,-0.55) circle[radius=1.4pt];
\fill[wspdblue] (7.55,0.35) circle[radius=1.4pt];
\draw[very thick] (0,0) -- node[pos=0.5, below, sloped] {$v$} (7,0.8);
\fill[black] (0,0) circle[radius=1.4pt];
\fill[black] (7,0.8) circle[radius=1.4pt];
\node[below right] at (0,0) {$u$};
\node[below left] at (7,0.8) {$w$};
\end{tikzpicture}
\caption{Pairs at hop distance $1$ (blue, solid) and $2$ (orange, dotted) from the pair $v=\{u,w\}$ in $H$ (schematic). By the proof of \cref{lem:radius}, their endpoints lie in the dashed balls of radius $\eps\Lambda\sigma$ and $2\eps\Lambda\sigma$ around $u$ and $w$.}
\label{fig:hop}
\end{figure}

\begin{lemma}\label{lem:radius}
In every round, an integer $r$ as in step~2 exists and satisfies $r\le r_{\max}$ for some $r_{\max}=O\big((d/\zeta)(1/\zeta+\log(d/\zeta))\big)$.
\end{lemma}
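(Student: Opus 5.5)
The plan is a standard local-growth counting argument. The geometric confinement of hop-balls combines with \cref{lem:grid}(1) to bound $\thetacell(N^r(v))$ polynomially in $r$, and this bound is played against the exponential growth forced whenever step~2 fails.

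First I would establish confinement. Let $v=\{u,w\}$ and $b\in N^r(v)$. I claim that the endpoints of $b$ lie in $\ball(u,r\eps\Lambda\sigma)\cup\ball(w,r\eps\Lambda\sigma)$, with one endpoint in each ball. The proof is by induction along a shortest path in $H[U]$. If $a$ and $b$ are compatible, some $\eps$-separated dumbbell $\{X,Y\}$ covers both, and after relabeling $x_a,x_b\in X$ and $y_a,y_b\in Y$. By \cref{obs:local}, $\dist(x_a,x_b)\le\diam(X)\le\eps\,\dist(X,Y)\le\eps\, s(a)<\eps\Lambda\sigma$, since every pair of the band has scale below $\Lambda\sigma$; the same bound holds for the $y$'s. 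So each hop moves each endpoint by less than $\eps\Lambda\sigma$, and the claim follows by the triangle inequality (this is \cref{fig:hop}).

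Next I would apply \cref{lem:grid} with $\rho=\max\{r,1\}$ to $S=N^r(v)$. This gives $\thetacell(N^r(v))\le K(r)^2$, where $K(r)=2(24\sqrt d\,\rho\Lambda/\zeta+2)^d$. Thus $\thetacell(N^r(v))\le (c\sqrt d\,\Lambda/\zeta)^{2d}\,\max\{r,1\}^{2d}$ for an absolute constant $c$.

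Now suppose step~2 fails for every $r<R$. Since $v\in N^0(v)$ is nonempty, $\thetacell(N^0(v))\ge1$, and induction gives $\thetacell(N^R(v))>(1+\zeta)^R$. Combining this with the polynomial bound yields
\[
(1+\zeta)^R< (c\sqrt d\,\Lambda/\zeta)^{2d}R^{2d}.
\]
Taking logarithms and using $\ln(1+\zeta)\ge\zeta/2$ for $\zeta\le1/3$, we get
\[
\frac{\zeta R}{2}<2d\ln(c\sqrt d/\zeta)+2d\ln\Lambda+2d\ln R.
\]
Since $\ln\Lambda\le m\ln3=O(1/\zeta)$, the first two terms are $O(d(1/\zeta+\log(d/\zeta)))$. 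The inequality therefore fails for some $R=r_{\max}+1$ with $r_{\max}=O\big((d/\zeta)(1/\zeta+\log(d/\zeta))\big)$. Consequently, the smallest valid $r$ exists and is at most $r_{\max}$. The sets $N^r(v)$ are finite and monotone, and $\thetacell$ is finite on them by \cref{lem:grid}(1), so step~2 is well defined.

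The main obstacle is the last step: resolving the inequality $x<A+B\ln x$ with $x=R$ cleanly. The $2d\ln R$ term must be absorbed without losing more than a logarithmic factor. I would use the standard fact that $x\ge 2A+2B\ln(2B)$ (up to a constant) implies $x>A+B\ln x$, where $A=O((d/\zeta)(1/\zeta+\log(d/\zeta)))$ and $B=4d/\zeta$. Since $\ln B=O(\log(d/\zeta))$, the bound stays within the claimed form. A minor care point is that the confinement is measured in $H[U]$ rather than $H$, but this is harmless because $H[U]$ has fewer edges.
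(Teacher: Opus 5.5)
Your proposal is correct and follows the paper's proof essentially step for step. Both arguments use the same chain: confinement of $N^r(v)$ to balls of radius $r\eps\Lambda\sigma$ around $u$ and $w$, then \cref{lem:grid}(1) for a bound on $\thetacell(N^r(v))$ polynomial in $r$, then the growth $(1+\zeta)^r$ from $\thetacell(N^0(v))=1$, and finally resolving an inequality of the form $x<A+B\ln x$. The only differences are cosmetic: you take $\rho=\max\{r,1\}$ where the paper takes $\rho=r+1$, and you resolve the final inequality more explicitly than the paper does.
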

\begin{proof}
Let $a,b\in E$ be compatible, and let $Z$ be an $\eps$-separated dumbbell covering both. Each side of $Z$ contains one endpoint of $a$ and one endpoint of $b$. By \cref{obs:local}, these two endpoints are at distance at most $\eps\,s(a)<\eps\Lambda\sigma$. By induction, all endpoints of pairs of $N^r(v)$ lie within distance $r\eps\Lambda\sigma$ of an endpoint of $v$ (\cref{fig:hop}).

By \cref{lem:grid}(1) with $\rho=r+1$, and since $2\le2\sqrt d\,(r+1)\Lambda/\zeta$, this implies $\thetacell(N^r(v))\le \mu_r:=4\big(26\sqrt d\,(r+1)\Lambda/\zeta\big)^{2d}$. Now let $r\ge1$ be such that the condition of step~2 fails for every $r'<r$, that is, $\thetacell(N^{r'+1}(v))>(1+\zeta)\,\thetacell(N^{r'}(v))$. Since $\thetacell(N^0(v))=\thetacell(\{v\})=1$, we get $(1+\zeta)^r<\thetacell(N^r(v))\le \mu_r$. Taking logarithms and using $\ln(1+\zeta)\ge\zeta/2$, $\ln\Lambda\le m\ln3$, and $m\le1/\zeta+1$, we get $r\le \xi\,(L+\ln(r+1))$ with $\xi=O(d/\zeta)$, $\xi\ge1$, and $L=1/\zeta+\ln(d/\zeta)\ge1$. This implies $r=O(\xi(L+\ln\xi))$, and $\ln\xi=O(L)$, so $r\le r_{\max}$ for some $r_{\max}=O\big((d/\zeta)(1/\zeta+\log(d/\zeta))\big)$. Hence the condition of step~2 holds for some $r\le r_{\max}$.
\end{proof}

\begin{proof}[Proof of \cref{thm:ptas}]
All dumbbells in the output are cell dumbbells whose cells contain points of $P$ (by the proof of \cref{lem:grid}(2)), hence $\eps'$-separated, and $\eps'=(1+2\delta/3)\eps$. Let $o^*$ be the offset of \cref{lem:bands}. Since the algorithm chooses an offset that minimizes $\sum_j|\W^o_j|$, \cref{lem:cores,lem:bands} give
\[
  |\W|\le\sum_j|\W^{o^*}_j|\le(1+\zeta)\sum_j\theta_\eps(E^{o^*}_j)\le(1+\zeta)^2\opt_\eps\le(1+3\zeta)\opt_\eps=(1+\delta)\opt_\eps,
\]
using $1/m\le\zeta$ and $\zeta\le1$.

By \cref{lem:radius}, every value $\thetacell(N^r(v))$ evaluated by the algorithm has $r\le r_{\max}+1$, and $N^r(v)$ satisfies the hypothesis of \cref{lem:grid} with $\rho=r_{\max}+2$ and $u,w$ the endpoints of $v$. Hence each such value, and the corresponding optimal family, is computed in $g'(d,\delta)+O(n^2\log n)$ time. Here, since $\log\Lambda\le m\log3$ and $\log r_{\max}=O(\log(d/\zeta))$, the numbers $K$ and $K_0$ of the proof of \cref{lem:grid} are $2^{O(d(1/\zeta+\log(d/\zeta)))}$. The exhaustive search examines at most $(K\cdot2^{K_0})^{2K^2}=2^{O(K_0K^2)}$ families, so $g'(d,\delta)=2^{2^{O(d(1/\delta+\log(d/\delta)))}}$. Building $H$ takes $O(n^4)$ compatibility tests. For each of the $m$ offsets, there are $O(n^2)$ rounds in total over all bands, and each round computes its hop-balls by one breadth-first search in $O(n^4)$ time and evaluates $O(r_{\max})$ values of $\thetacell$. The total running time is therefore $O(m\,n^6)+O(m\,r_{\max}\,n^2(g'(d,\delta)+n^2\log n))$, which is $f(d,\delta)\cdot n^{6}$, since $\zeta=\delta/3$.
\end{proof}

\begin{remark}\label{rem:ptas}
The only place where the separation is relaxed is the rounding. Suppose that, for a set $S$ of pairs as in \cref{lem:grid}, one could decide in $|S|^{h(k)}$ time whether $S$ can be covered by $k$ $\eps$-separated dumbbells, and find such a cover. Then \cref{lem:bands,lem:cores,lem:radius} hold with $\theta_\eps$ in place of $\thetacell$ throughout. The only additional step is a bound on $\theta_\eps(S)$ for such sets $S$, which replaces \cref{lem:grid}(1). For this, consider the grid of side $\eta'=\eps\sigma/(4\sqrt d)$. If a pair of the band has endpoints in cells $c$ and $c'$, then $\diam(c)=\eps\sigma/4$ and $\dist(c,c')\ge\sigma/2$, so $\{P\cap c,P\cap c'\}$ is an $\eps$-separated dumbbell that covers the pair. As in \cref{lem:grid}, this gives $\theta_\eps(S)\le K'^2$ with $K'=2(8\sqrt d\,\rho\Lambda+2)^d$. Hence the local-growth radius is bounded by a function of $d$ and $\delta$, every local instance has optimum at most $K'^2$ for the corresponding $\rho$, and we would obtain an $\eps$-WSPD of size at most $(1+\delta)\opt_\eps$ in $n^{O(h(K'^2))}$ time, that is, a PTAS that does not relax $\eps$. For clique partition in unit disk graphs, such local solvers rely on the fact that optimal clusterings in the plane can be separated by lines~\cite{CRW91}; see~\cite{PS12,DP11}. We do not know an analogue of this property for WSPDs, even in the plane.
\end{remark}

\section{Doubling Metrics}
\label{sec:doubling}

The scheme of \cref{sec:ptas} uses the geometry of $\R^d$ only in the rounding step. The grid needs coordinates, but the analysis uses only two of its properties: every cell has small diameter, and a ball meets few cells. Both properties also hold for a net of the points in a metric space of bounded doubling dimension. We therefore replace the grid cells by clusters of points around the net points, and the rest of the algorithm and of its analysis carries over.

A metric space has \emph{doubling dimension at most} $\kappa$ if every ball of radius $2R$ can be covered by $2^\kappa$ balls of radius $R$, and its doubling dimension is the smallest such $\kappa$. We use the doubling dimension of $P$ itself, viewed as a metric space, so all balls below are centered at points of $P$. The doubling dimension of $\R^d$ is $\Theta(d)$, and a subset has at most twice the doubling dimension of the space containing it. We assume that distances can be evaluated in $O(1)$ time.

\begin{theorem}\label{thm:doubling}
There is a function $f$ such that the following holds. Let $\kappa\ge1$ and $\delta\in(0,1]$. Given a set $P$ of $n$ points of doubling dimension at most $\kappa$ in a metric space and $\eps\in(0,1)$, one can compute in $f(\kappa,\delta)\cdot n^{6}$ time a $(1+2\delta/3)\eps$-WSPD $\W$ of $P$ with $|\W|\le(1+\delta)\opt_\eps$. Moreover, $f(\kappa,\delta)=2^{2^{O(\kappa(1/\delta+\log(\kappa/\delta)))}}$, and the algorithm does not need to know $\kappa$.
\end{theorem}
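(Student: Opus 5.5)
We keep the shifting over scales (\cref{lem:bands}) and the local growth (\cref{lem:cores,lem:radius}) unchanged, since they use only \cref{obs:hered,obs:local} and the compatibility graph, which are metric notions. Only the rounding of \cref{sec:rounding} changes. In each band with scales in $[\sigma,\Lambda\sigma)$, we replace the grid by a net. Let $\eta=\zeta\eps\sigma/12$, and compute greedily an $\eta$-net $N\subseteq P$ in $O(n^2)$ time. Every point of $P$ is within $\eta$ of $N$, and the points of $N$ are pairwise more than $\eta$ apart. Assign each point of $P$ to its nearest net point, breaking ties arbitrarily. This gives clusters of diameter at most $2\eta$, which play the role of cells, since a cell had diameter $\eta\sqrt d$. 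A \emph{cluster dumbbell} is a pair $\{A,B\}$ of nonempty sets of clusters such that $\{\bigcup A,\bigcup B\}$, viewed as a set of points of $P$, is $\eps'$-separated. All diameters and distances are now measured on finite point sets, so no closures are needed, and cluster dumbbells are dumbbells of $P$.

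We then redo \cref{lem:round,lem:grid}. For \cref{lem:round}, replace each side $X$ by the union of the clusters meeting it. Every added point is within $2\eta\le\zeta\eps\ell/2$ of $X$, using $\ell\ge\sigma/3$. The same computation shows that the diameter grows by at most $4\eta$ and the distance shrinks by at most $4\eta$. With $\eta=\zeta\eps\sigma/24$ instead, this keeps $1+\zeta/2\le(1+2\zeta)(1-\zeta/2)$, so we fix this constant.

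For \cref{lem:grid}, the counting bound is the \textbf{main point}. By the standard packing property of doubling spaces, a ball of radius $R$ in $P$ contains at most $(4R/\eta)^{O(\kappa)}$ net points. Hence $|Q|\le K=2(C\rho\Lambda/\zeta)^{O(\kappa)}$. The same bound, with $D=2\eps\Lambda\sigma$, limits the clusters that can share a side with a given cluster to $K_0=(C\Lambda/\zeta)^{O(\kappa)}$ candidates, because those clusters lie in a ball of radius $D+2\eta$ around it. Part~(1) follows as before, since $\{c,c'\}$ has diameters at most $2\eta$ and distance at least $\sigma-4\eta\ge\sigma/2$. Part~(2) is again brute force: the diameter and distance of unions of clusters are maxima and minima over pairs of clusters, so we precompute these values for the relevant clusters, and coverage depends only on the clusters of the endpoints. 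The scale cancels in $R/\eta$, so $g$ depends only on $\kappa,\zeta,\rho$.

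\cref{lem:radius} then goes through with $\mu_r=(C(r+1)\Lambda/\zeta)^{O(\kappa)}$, which gives $r_{\max}=O((\kappa/\zeta)(1/\zeta+\log(\kappa/\zeta)))$. The running-time analysis is copied with $\kappa$ in place of $d$. The greedy net costs $O(n^2)$ per band, which is dominated by the $n^6$ term.

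The main obstacle is that the algorithm must not need $\kappa$. The net and clusters are defined without $\kappa$, and so are the growth rule and the candidate sets, which consist of clusters within distance $D+2\eta$ of a given cluster. The brute force must then enumerate only the actual clusters present in the local instance, not a bound computed from $\kappa$. Its running time is bounded in terms of $\kappa$ only in the analysis. The stopping radius $r$ is found by testing $r=0,1,\dots$ until the condition holds, so $r_{\max}$ is used only in the analysis as well.
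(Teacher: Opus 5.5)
Your proposal is correct and follows essentially the paper's proof. You replace grid cells by clusters of a greedy $\eta$-net, redo the rounding and local-solver lemmas via the packing bound $|P_\eta\cap\ball(z,R)|\le(4R/\eta)^\kappa$, and observe that neither the growth rule nor the brute force uses $\kappa$. The differences are cosmetic: the paper takes $\eta=\zeta\eps\sigma/48$ and enumerates all $4^K$ pairs of subsets of $Q$ instead of keeping your $K_0$ locality refinement, and both choices give the same double-exponential $f$.
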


The algorithm is that of \cref{sec:ptas}, with the same $\zeta=\delta/3$, $m$, $\eps'$, and $\Lambda$, and with the grid replaced by a net. \Cref{obs:hered,obs:local,lem:bands} and the first paragraph of the proof of \cref{lem:radius} use only the triangle inequality, so they hold verbatim. \Cref{lem:cores} also holds once $\thetacell$ is redefined as below, with \cref{lem:round-net} in place of \cref{lem:round}. The only geometric part of \cref{sec:ptas} is \cref{sec:rounding}, and we now replace \cref{lem:round,lem:grid}.

\subparagraph*{Nets and cells.}
Fix a band with scales in $[\sigma,\Lambda\sigma)$ and let $\eta=\zeta\eps\sigma/48$. Let $P_\eta\subseteq P$ be a maximal subset with pairwise distances greater than $\eta$, computed greedily in $O(n^2)$ time. By maximality, every point of $P$ is within distance $\eta$ of $P_\eta$. Assign every point of $P$ to a nearest point of $P_\eta$. The \emph{cell} $\Gamma(q)$ of $q\in P_\eta$ is the set of points assigned to $q$, so $\Gamma(q)\subseteq\ball(q,\eta)$ and $\diam(\Gamma(q))\le2\eta$. For $A\subseteq P_\eta$, let $\Gamma(A)=\bigcup_{q\in A}\Gamma(q)$. A \emph{cell dumbbell} is now a set $\{A,B\}$ of two nonempty subsets of $P_\eta$ such that $\{\Gamma(A),\Gamma(B)\}$ is an $\eps'$-separated dumbbell, which we identify with $\{A,B\}$. As before, $\thetacell(S)$ denotes the minimum number of cell dumbbells covering a set $S$ of pairs. Unlike grid cells, these cells are sets of points of $P$, so separation is tested directly on the points.

\begin{lemma}\label{lem:round-net}
Let $\{X,Y\}$ be an $\eps$-separated dumbbell covering a pair $e$ with $s(e)\ge\sigma$, and let $A,B\subseteq P_\eta$ be the net points whose cells meet $X$ and $Y$. Then $\{A,B\}$ is a cell dumbbell that covers every pair covered by $\{X,Y\}$. In particular, $\thetacell(S)\le\theta_\eps(S)$ for every set $S$ of pairs of the band.
\end{lemma}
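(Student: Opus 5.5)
We follow the proof of \cref{lem:round}, replacing grid cells by net cells. Let $\ell=\dist(X,Y)$. By \cref{obs:local}, $\ell\ge s(e)/(1+2\eps)\ge\sigma/3$, so $\eta=\zeta\eps\sigma/48\le\zeta\eps\ell/16$.

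The first step is containment and disjointness. Every $x\in X$ is assigned to some net point $q$, so $q\in A$ and $x\in\Gamma(q)$. Hence $X\subseteq\Gamma(A)$, and likewise $Y\subseteq\Gamma(B)$. Therefore $\{\Gamma(A),\Gamma(B)\}$ covers every pair covered by $\{X,Y\}$, and both sides are nonempty.

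Next we bound the enlargement. Every point $p\in\Gamma(A)$ lies in a cell $\Gamma(q)$ with $q\in A$, and this cell contains some $x\in X$. Since $\diam(\Gamma(q))\le2\eta$, we get $\dist(p,x)\le2\eta$. So every point of $\Gamma(A)$ is within $2\eta$ of $X$, and every point of $\Gamma(B)$ is within $2\eta$ of $Y$. By the triangle inequality,
\[
\diam(\Gamma(A))\le\eps\ell+4\eta\le\eps\ell(1+\zeta/4)
\quad\text{and}\quad
\dist(\Gamma(A),\Gamma(B))\ge\ell-4\eta\ge\ell(1-\zeta/4),
\]
using $4\eta\le\zeta\eps\ell/4\le\zeta\ell/4$. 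The same bound holds for $\Gamma(B)$. The distance bound is positive, so $\Gamma(A)$ and $\Gamma(B)$ are disjoint. Since the cells partition $P$, this also gives $A\cap B=\emptyset$.

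It remains to compare with $\eps'$. We need $1+\zeta/4\le(1+2\zeta)(1-\zeta/4)$, which holds for $\zeta\le1/2$ because the right side equals $1+7\zeta/4-\zeta^2/2$. So $\{A,B\}$ is a cell dumbbell, i.e., $\{\Gamma(A),\Gamma(B)\}$ is $\eps'$-separated.

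For the consequence, take a minimum family of $\theta_\eps(S)$ $\eps$-separated dumbbells covering $S$. Every dumbbell that covers some pair of $S$ covers a pair of scale at least $\sigma$. Replace each such dumbbell by its cell dumbbell and drop the rest. This gives $\thetacell(S)\le\theta_\eps(S)$.

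The main point to check is the $2\eta$ (rather than $\eta$) enlargement. A cell met by $X$ is within $2\eta$ of $X$ only through its diameter bound, not through its center. The choice of the constant $48$ absorbs this factor.
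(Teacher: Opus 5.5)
Your proposal is correct and follows the paper's proof essentially step by step: you use the same $2\eta$ enlargement, the same bounds $\eps\ell+4\eta$ and $\ell-4\eta$ (via $4\eta\le\zeta\eps\ell/4$), and the same inequality $1+\zeta/4\le(1+2\zeta)(1-\zeta/4)$. You also make explicit two points the paper leaves implicit, namely that the sides are disjoint (so $A\cap B=\emptyset$) and how $\thetacell(S)\le\theta_\eps(S)$ follows.
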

\begin{proof}
Let $\ell=\dist(X,Y)\ge\sigma/(1+2\eps)\ge\sigma/3$. Every $z\in\Gamma(A)$ lies in a cell $\Gamma(q)$ that contains a point $x\in X$, so $\dist(z,x)\le\dist(z,q)+\dist(q,x)\le2\eta$. Hence $\diam(\Gamma(A))\le\eps\ell+4\eta$, similarly for $B$, and $\dist(\Gamma(A),\Gamma(B))\ge\ell-4\eta$. Since $4\eta=\zeta\eps\sigma/12\le\zeta\eps\ell/4$, the dumbbell $\{\Gamma(A),\Gamma(B)\}$ is $\eps(1+\zeta/4)/(1-\zeta/4)$-separated, and $1+\zeta/4\le(1+2\zeta)(1-\zeta/4)$ for $\zeta\le1/2$. Finally, $X\subseteq\Gamma(A)$ and $Y\subseteq\Gamma(B)$.
\end{proof}

\begin{lemma}\label{lem:pack}
For every $z\in P$ and $R\ge\eta$, we have $|P_\eta\cap\ball(z,R)|\le(4R/\eta)^\kappa$.
\end{lemma}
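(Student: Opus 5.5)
The plan is the standard packing argument, using the doubling property of $P$ itself. The points of $P_\eta\cap\ball(z,R)$ are pairwise at distance greater than $\eta$, so a small ball contains at most one of them. Repeatedly applying the doubling property will cover $\ball(z,R)$ by few balls of radius about $\eta/2$, each holding at most one net point.

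First, let $k=\lceil\log_2(2R/\eta)\rceil$, which satisfies $k\ge1$ because $R\ge\eta$. Apply the definition of doubling dimension $k$ times, starting from $\ball(z,R)$. Each application replaces a ball of radius $\rho$, centered at a point of $P$, by $2^\kappa$ balls of radius $\rho/2$ centered at points of $P$. In the end, $\ball(z,R)\cap P$ is covered by $2^{\kappa k}$ balls of radius $R/2^k\le\eta/2$. A small technicality is that the definition is stated for balls of radius $2R'$ covered by balls of radius $R'$; we use it with $R'=\rho/2$ at each step.

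Next, each ball of radius at most $\eta/2$ has diameter at most $\eta$, by the triangle inequality through its center. Since distinct net points are at distance greater than $\eta$, each such ball contains at most one point of $P_\eta$. Therefore $|P_\eta\cap\ball(z,R)|\le2^{\kappa k}$.

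Finally, bound the exponent. We have $2^k<2\cdot 2R/\eta=4R/\eta$, so $2^{\kappa k}\le(4R/\eta)^\kappa$, as claimed.

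The only delicate point is the rounding of $k$, and the factor $4$ leaves enough slack to absorb it. Since all the balls are taken in the metric space $P$, the centers stay in $P$, and no factor-$2$ loss from passing to a subspace arises.
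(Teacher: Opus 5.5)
Your proof is correct and is essentially the paper's argument: apply the doubling property repeatedly until the balls have radius at most $\eta/2$, observe that each such ball contains at most one point of $P_\eta$, and bound the count $2^{\kappa k}$ by $(4R/\eta)^\kappa$. The only difference is in the rounding. You take $k=\lceil\log_2(2R/\eta)\rceil$, while the paper takes $\lfloor\log_2(2R/\eta)\rfloor+1$ to make the radius strictly less than $\eta/2$; your non-strict bound suffices because net points are more than $\eta$ apart.
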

\begin{proof}
Let $t=\lfloor\log_2(2R/\eta)\rfloor+1$. Applying the doubling property $t$ times covers $\ball(z,R)\cap P$ by $2^{\kappa t}\le(4R/\eta)^\kappa$ balls of radius $R/2^t<\eta/2$. Each of them contains at most one point of $P_\eta$.
\end{proof}

\begin{lemma}\label{lem:local-net}
Let $\rho\ge1$, let $u,w\in P$, and let $S$ be a set of pairs of the band whose endpoints lie in $\ball(u,R)\cup\ball(w,R)$, where $R=\rho\,\eps\Lambda\sigma$. Let $Q\subseteq P_\eta$ be the net points whose cells contain an endpoint of a pair of $S$. Then $|Q|\le K=2(384\rho\Lambda/\zeta)^\kappa$, and:
\begin{enumerate}
\item $\thetacell(S)\le K^2$;
\item $\thetacell(S)$ and an optimal family of cell dumbbells can be computed in $2^{O(K^3)}+O(4^Kn^2)$ time.
\end{enumerate}
\end{lemma}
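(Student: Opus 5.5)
The proof follows the proof of \cref{lem:grid}, with \cref{lem:pack} in place of the count of grid cells met by a ball.

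\emph{Bound on $|Q|$.} Let $q\in Q$. Its cell $\Gamma(q)$ contains an endpoint $z\in\ball(u,R)\cup\ball(w,R)$, and $\dist(q,z)\le\eta$, so $q\in\ball(u,R+\eta)\cup\ball(w,R+\eta)$. Since $\Lambda\ge1$, $\rho\ge1$ and $\zeta\le1/3$, we have $\eta=\zeta\eps\sigma/48\le R$. Hence $R+\eta\le2R$, and \cref{lem:pack} bounds each ball by $(8R/\eta)^\kappa=(384\rho\Lambda/\zeta)^\kappa$. This gives $|Q|\le K$.

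\emph{Part (1).} Let $e\in S$ have endpoints in the cells of $q$ and $q'$. Then $\diam(\Gamma(q))\le2\eta=\zeta\eps\sigma/24$. Also
\[
\dist(\Gamma(q),\Gamma(q'))\ge s(e)-4\eta\ge\sigma-\zeta\eps\sigma/12\ge\sigma/2 .
\]
So $\{\{q\},\{q'\}\}$ is $\eps'$-separated, and its sides are disjoint because the cells partition $P$. It covers $e$. Every pair of $S$ is covered by the cell dumbbell of its pair of cells, and there are at most $K^2$ such pairs of cells in $Q$, so $\thetacell(S)\le K^2$.

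\emph{Part (2).} As in \cref{lem:grid}, a cell dumbbell that covers a pair of $S$ can be shrunk to $\{A\cap Q,B\cap Q\}$. Both sides stay nonempty, separation is preserved by \cref{obs:hered}, and the covered pairs of $S$ are unchanged. So the candidates are pairs of nonempty subsets of $Q$, at most $4^K$ of them. For each candidate, the algorithm checks that the sides are disjoint and that $\{\Gamma(A),\Gamma(B)\}$ is $\eps'$-separated. It does this directly from the points, computing the diameters and the distance in $O(n^2)$ time. This accounts for the term $O(4^Kn^2)$. Computing $Q$ and the cell of each point beforehand takes $O(n^2)$ time.

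Coverage of a pair of $S$ depends only on the two cells of its endpoints. The algorithm therefore replaces $S$ by at most $K^2$ distinct pairs of cells, and records for each valid candidate which of these pairs it covers. By (1), an optimal solution uses at most $K^2$ candidates. The optimum can then be found by dynamic programming over subsets of the $K^2$ cell pairs. This takes $2^{K^2}\cdot4^K$ time, which is within $2^{O(K^3)}$. Alternatively, exhaustive search over families of at most $K^2$ candidates takes $(4^K)^{K^2}=2^{O(K^3)}$ time.

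The main obstacle is that $P_\eta$ and the cells are defined only through $P$, so the validity test of a candidate costs $\Theta(n^2)$ rather than a function of the parameters. This is why that term multiplies $4^K$ instead of being absorbed into the function of the parameters. The packing bound must also be applied to $R+\eta$ rather than to $R$, which is why the constant $384$ appears.
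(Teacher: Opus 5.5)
Your proposal is correct and follows the paper's proof essentially step by step. It uses the same packing argument via $Q\subseteq\ball(u,2R)\cup\ball(w,2R)$ and \cref{lem:pack}, the same singleton-cell dumbbells $\{\{q\},\{q'\}\}$ for (1), and the same restriction to at most $4^K$ candidates with sides in $Q$, each tested on the points in $O(n^2)$ time and followed by exhaustive search, for (2). Your dynamic program over subsets of the at most $K^2$ cell-pair types is a harmless refinement that gives $2^{O(K^2)}$ for the search phase. In (1), disjointness of the sides needs $q\ne q'$, which your bound $\dist(\Gamma(q),\Gamma(q'))\ge\sigma/2$ already implies.
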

\begin{proof}
Every $q\in Q$ is within distance $\eta\le R$ of an endpoint, so $Q\subseteq\ball(u,2R)\cup\ball(w,2R)$, and \cref{lem:pack} gives $|Q|\le2(8R/\eta)^\kappa=K$, since $R/\eta=48\rho\Lambda/\zeta$.

(1) Let $e\in S$ have endpoints in the cells of $q$ and $q'$. The set $\{\{q\},\{q'\}\}$ is a cell dumbbell covering $e$, because $\diam(\Gamma(q))\le2\eta=\zeta\eps\sigma/24$ and $\dist(\Gamma(q),\Gamma(q'))\ge\sigma-4\eta\ge\sigma/2$ (in particular, $q\ne q'$).

(2) A cell dumbbell that covers no pair of $S$ can be dropped. If a cell dumbbell $\{A,B\}$ covers a pair of $S$, then $A\cap Q$ and $B\cap Q$ are nonempty, and replacing $\{A,B\}$ by $\{A\cap Q,B\cap Q\}$ keeps it $\eps'$-separated by \cref{obs:hered} and keeps the pairs of $S$ that it covers, so we may assume $A,B\subseteq Q$. There are at most $4^K$ such candidates, and each is tested on the points in $O(n^2)$ time. Whether a candidate covers a pair of $S$ depends only on the cells of its endpoints, which gives at most $K^2$ types of pairs. By~(1), an optimal solution uses at most $K^2$ candidates, and exhaustive search over families of at most $K^2$ of the $4^K$ candidates takes $2^{O(K^3)}$ time.
\end{proof}

\begin{proof}[Proof of \cref{thm:doubling}]
All output dumbbells are cell dumbbells, hence $\eps'$-separated, and $\eps'=(1+2\delta/3)\eps$. The size bound follows from \cref{lem:bands,lem:cores}, with \cref{lem:round-net} in place of \cref{lem:round}, exactly as in the proof of \cref{thm:ptas}. For the running time, the proof of \cref{lem:radius}, with \cref{lem:local-net}(1) in place of \cref{lem:grid}, gives
\[
  (1+\zeta)^r<\thetacell(N^r(v))\le 4\big(384(r+1)\Lambda/\zeta\big)^{2\kappa},
\]
so $r\le r_{\max}$ for some $r_{\max}=O\big((\kappa/\zeta)(1/\zeta+\log(\kappa/\zeta))\big)$. Every set $N^{r}(v)$ that the algorithm evaluates has $r\le r_{\max}+1$ and satisfies the hypothesis of \cref{lem:local-net} with $\rho=r_{\max}+2$, and $\Lambda\le3^m$ gives $K=2^{O(\kappa(1/\zeta+\log(\kappa/\zeta)))}$. As in \cref{lem:grid}, the scale $\eps\sigma$ cancels in $R/\eta$, so with $\Lambda\le3^m$ the bound on $K$ does not depend on $\eps$. The remaining steps are as in the proof of \cref{thm:ptas}, and computing one net per nonempty band takes $O(n^4)$ time per offset, so the total time is $f(\kappa,\delta)\cdot n^6$. The algorithm never uses $\kappa$: the growth stops by itself, and the local solver enumerates the subsets of whichever set $Q$ it encounters.
\end{proof}

\section{Hardness in General Metrics}
\label{sec:general}

Our proof does not extend to general metrics, because the growth bound of \cref{lem:radius} requires bounded doubling dimension. In fact, relaxing the separation by a factor $\tau<2$ keeps the problem hard there. We show this with a direct reduction from clique cover, which works for every $\eps$ and $\tau\ge1$ with $\tau<2$ and $\tau\eps<1/2$, in particular for every $\tau<2$ when $\eps\le1/4$. The metric consists of copies of a graph: two vertices of a copy are at distance $1$ if they are adjacent and $2$ otherwise, and two vertices of different copies are at a large distance. A separated dumbbell between two copies then has cliques as sides, even if the separation is relaxed by a factor below $2$, so the minimum WSPD encodes clique covers of the graph. For $\tau=1$ and $\eps=1/3$, this is the setting of the hardness results of Buchin et al.~\cite[Thm.~28, Lemma~29]{BCHKLPRW26}, and our construction pads the instance with many copies of the input graph as in their Lemma~29. The new ingredients are the slack $\tau<2$, all $\eps<1/2$, and the explicit exponent $2/3$, which follows from the product bound~\eqref{eq:rbounds} and the hardness of the chromatic number.

For a graph $G$ with vertex set $V$, a \emph{clique} is a nonempty set of pairwise adjacent vertices (a single vertex is a clique). Let $\bar\chi(G)$ be the minimum number of cliques whose union is $V$, and let $\omega(G)$ be the maximum size of a clique. For two graphs $G_1$ and $G_2$ with vertex sets $V_1$ and $V_2$, let $\mathrm{cc}(G_1,G_2)$ be the minimum number of products $A\times B$ of a clique $A$ of $G_1$ and a clique $B$ of $G_2$ whose union is $V_1\times V_2$. If $G_2$ has a single vertex, then $\mathrm{cc}(G_1,G_2)=\bar\chi(G_1)$. Moreover,
\begin{equation}\label{eq:rbounds}
  \frac{|V_1|\,|V_2|}{\omega(G_1)\,\omega(G_2)}\;\le\; \mathrm{cc}(G_1,G_2)\;\le\;\bar\chi(G_1)\,\bar\chi(G_2),
\end{equation}
since each product covers at most $\omega(G_1)\,\omega(G_2)$ elements, and the products of two clique covers cover $V_1\times V_2$.

\subparagraph*{Block metrics.}
Let $G_1,\dots,G_k$ be graphs with pairwise disjoint vertex sets $V_1,\dots,V_k$, and let $\Delta\ge2$. The \emph{block metric} on $V_1\cup\dots\cup V_k$ is defined as follows. Two distinct vertices of the same block $V_i$ are at distance $1$ if they are adjacent in $G_i$ and at distance $2$ otherwise, and two vertices of different blocks are at distance $\Delta$ (\cref{fig:blocks}). This is a metric: if $x$ and $z$ lie in different blocks, then every $y$ lies in a block different from that of $x$ or from that of $z$, so $\dist(x,y)+\dist(y,z)\ge\Delta$; if $x$ and $z$ lie in the same block, then $\dist(x,z)\le2$, while $\dist(x,y)+\dist(y,z)\ge2$ for $y\notin\{x,z\}$.

\begin{figure}[htb]
\centering
\begin{tikzpicture}[font=\small]
\draw[wspdblue!25, line width=13pt, line cap=round, line join=round] (-0.5,-0.45) -- (0.5,-0.45) -- (0.5,0.45) -- cycle;
\fill[wspdblue!25] (-0.5,-0.45) -- (0.5,-0.45) -- (0.5,0.45) -- cycle;
\draw[wspdblue!25, line width=13pt, line cap=round, line join=round] (4.2,-0.4) -- (4.2,0.4);
\draw[wspdblue!60, line width=2.5pt] (0.75,0) -- node[below=2pt, black] {$\Delta=1/\eps$} (3.95,0);
\draw[thick] (-0.5,-0.45) -- (0.5,-0.45);
\draw[thick] (-0.5,-0.45) -- (0.5,0.45);
\draw[black!45, densely dotted] (-0.5,-0.45) -- (-0.5,0.45);
\draw[thick] (0.5,-0.45) -- (0.5,0.45);
\draw[black!45, densely dotted] (0.5,-0.45) -- (-0.5,0.45);
\draw[thick] (0.5,0.45) -- (-0.5,0.45);
\fill[black] (-0.5,-0.45) circle[radius=1.4pt];
\fill[black] (0.5,-0.45) circle[radius=1.4pt];
\fill[black] (0.5,0.45) circle[radius=1.4pt];
\fill[black] (-0.5,0.45) circle[radius=1.4pt];
\node[below] at (0,-0.55) {$G_1$};
\draw[thick] (4.2,-0.4) -- (4.2,0.4);
\draw[black!45, densely dotted] (4.2,-0.4) -- (4.95,0);
\draw[thick] (4.2,0.4) -- (4.95,0);
\fill[black] (4.2,-0.4) circle[radius=1.4pt];
\fill[black] (4.2,0.4) circle[radius=1.4pt];
\fill[black] (4.95,0) circle[radius=1.4pt];
\node[below] at (4.5,-0.55) {$G_2$};
\draw[thick] (1.7,1.7) -- (2.25,2.05);
\draw[black!45, densely dotted] (1.7,1.7) -- (2.8,1.7);
\draw[black!45, densely dotted] (2.25,2.05) -- (2.8,1.7);
\fill[black] (1.7,1.7) circle[radius=1.4pt];
\fill[black] (2.25,2.05) circle[radius=1.4pt];
\fill[black] (2.8,1.7) circle[radius=1.4pt];
\node[above] at (2.25,2.15) {$G_3$};
\node[above left, inner sep=1pt] at (1.975,1.875) {$1$};
\node[below, inner sep=2pt] at (2.25,1.7) {$2$};
\end{tikzpicture}
\caption{A block metric and a dumbbell whose sides are cliques of $G_1$ and $G_2$.}
\label{fig:blocks}
\end{figure}

\begin{lemma}\label{lem:blocks}
Let $\eps>0$ and $\tau\in[1,2)$ with $\tau\eps<1/2$, and let $P$ be a block metric with parameter $\Delta=1/\eps$. For a dumbbell $\{X,Y\}$ of $P$, the following are equivalent: (i) $\{X,Y\}$ is $\tau\eps$-separated; (ii) $\{X,Y\}$ is $\eps$-separated; (iii) either $X$ and $Y$ are singletons contained in the same block, or there are $i\ne j$ such that $X$ is a clique of $G_i$ and $Y$ is a clique of $G_j$.
\end{lemma}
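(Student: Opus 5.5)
We prove the cycle of implications (ii)$\Rightarrow$(i)$\Rightarrow$(iii)$\Rightarrow$(ii). The first is immediate, since $\tau\ge1$ means every $\eps$-separated dumbbell is also $\tau\eps$-separated. The other two come from a case analysis on whether $X$ and $Y$ lie in the same block. Recall that distances inside a block are $1$ or $2$ and distances across blocks are $\Delta=1/\eps\ge 2$ (indeed $\eps\le\tau\eps<1/2$).

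For (iii)$\Rightarrow$(ii), we check the two cases of (iii) directly.
\begin{itemize}
\item If $X=\{x\}$ and $Y=\{y\}$ are singletons in the same block, both diameters are $0$, so the dumbbell is trivially $\eps$-separated.
\item If $X$ is a clique of $G_i$ and $Y$ is a clique of $G_j$ with $i\ne j$, then $\diam(X),\diam(Y)\le1$ and $\dist(X,Y)=\Delta$. Hence the diameters are at most $1=\eps\Delta=\eps\dist(X,Y)$.
\end{itemize}

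For (i)$\Rightarrow$(iii), assume $\{X,Y\}$ is $\tau\eps$-separated. First we show that a side cannot meet two blocks. If $X$ contained points of two blocks, then $\diam(X)=\Delta$. Every distance in $P$ is at most $\Delta$, so $\dist(X,Y)\le\Delta$. Separation would then force $\Delta\le\tau\eps\Delta$, contradicting $\tau\eps<1/2<1$. The same holds for $Y$. So each side lies within a single block, and there are two cases.

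\textbf{Case 1: $X$ and $Y$ lie in the same block.} Then $\dist(X,Y)\le2$, so each diameter is at most $2\tau\eps<1$. Since nonzero distances are at least $1$, both diameters are $0$, so both sides are singletons, giving the first alternative of (iii).

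\textbf{Case 2: $X\subseteq V_i$ and $Y\subseteq V_j$ with $i\ne j$.} Then $\dist(X,Y)=\Delta$, so each diameter is at most $\tau\eps\Delta=\tau<2$. Since every intra-block distance lies in $\{1,2\}$, all distances within $X$ equal $1$, so $X$ is a pairwise adjacent set, that is, a clique of $G_i$. The same holds for $Y$ in $G_j$.

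The only delicate point is the strictness. The inequality $\tau<2$ is exactly what excludes non-adjacent pairs, at distance $2$, from a side in Case 2. The inequality $\tau\eps<1/2$ is exactly what forces singletons in Case 1 and rules out sides that meet two blocks. Nonemptiness and disjointness of the sides are part of the definition of a dumbbell, so they need no checking.
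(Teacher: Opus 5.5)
Your proof is correct and takes essentially the same route as the paper: the same cycle (ii)$\Rightarrow$(i)$\Rightarrow$(iii)$\Rightarrow$(ii), driven by the same diameter bounds $2\tau\eps<1$ and $\tau\eps\Delta=\tau<2$. The only difference is organizational: the paper splits into cases on $\ell=\dist(X,Y)\in\{1,2,\Delta\}$, where in the case $\ell=\Delta$ the bound $\diam<2$ itself confines each side to one block, whereas you first rule out sides that meet two blocks and then split on whether $X$ and $Y$ share a block.
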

\begin{proof}
Note that $\Delta=1/\eps>2\tau\ge2$. (ii) implies (i) because $\tau\ge1$. Assume (i), and let $\ell=\dist(X,Y)\in\{1,2,\Delta\}$. If $\ell\le2$, then both sides have diameter at most $2\tau\eps<1$, so they are singletons, and since $\ell\le2$ they lie in the same block. If $\ell=\Delta$, then both sides have diameter at most $\tau\eps\Delta=\tau<2$. Two points at distance less than $2$ lie in the same block and are adjacent, so each side is a clique of one graph, and the two sides lie in different blocks because $\ell=\Delta$. Hence (iii) holds. Finally, assume (iii). Two singletons are $\eps$-separated. If $X$ and $Y$ are cliques of $G_i$ and $G_j$ with $i\ne j$, then all distances between $X$ and $Y$ are $\Delta$ and both diameters are at most $1=\eps\Delta$, so (ii) holds.
\end{proof}

\begin{lemma}\label{lem:blockopt}
Under the assumptions of \cref{lem:blocks}, every $\tau\eps$-WSPD of $P$ is an $\eps$-WSPD, and
\[
  \opt_\eps(P)\;=\;\sum_{i}\binom{|V_i|}2+\sum_{i<j}\mathrm{cc}(G_i,G_j).
\]
\end{lemma}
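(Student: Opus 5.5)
The first claim follows at once from \cref{lem:blocks}. Every dumbbell of a $\tau\eps$-WSPD is $\tau\eps$-separated, so by (i)$\Rightarrow$(ii) it is $\eps$-separated. Hence the family is an $\eps$-WSPD.

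For the formula, I will split the pairs of $P$ into two kinds: pairs inside a block $V_i$, and pairs between two blocks $V_i,V_j$ with $i<j$. By \cref{lem:blocks}(iii), every $\eps$-separated dumbbell is one of two types. It is either a pair of singletons inside one block, which covers exactly one intra-block pair. Or it is a product $X\times Y$ of a clique $X$ of $G_i$ and a clique $Y$ of $G_j$ with $i\ne j$, which covers only pairs between $V_i$ and $V_j$.

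\emph{Lower bound.} Take any $\eps$-WSPD $\W$. Each intra-block pair $\{x,y\}\subseteq V_i$ can only be covered by the singleton dumbbell $\{\{x\},\{y\}\}$, since dumbbells of the second type never cover it. This forces $\sum_i\binom{|V_i|}2$ distinct dumbbells. Now fix $i<j$. The pairs between $V_i$ and $V_j$ correspond to $V_i\times V_j$. The dumbbells covering them are of the second type with sides a clique of $G_i$ and a clique of $G_j$, so they form a cover of $V_i\times V_j$ by clique products. Hence there are at least $\mathrm{cc}(G_i,G_j)$ of them. Dumbbells of the second type are assigned to a unique unordered block pair $\{i,j\}$, and they are disjoint from the singleton dumbbells. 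Summing therefore gives $|\W|\ge\sum_i\binom{|V_i|}2+\sum_{i<j}\mathrm{cc}(G_i,G_j)$.

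\emph{Upper bound.} Take all singleton dumbbells inside blocks. Add, for each $i<j$, an optimal cover of $V_i\times V_j$ by clique products $A\times B$, realized as the dumbbells $\{A,B\}$. These are $\eps$-separated by (iii)$\Rightarrow$(ii) of \cref{lem:blocks}, and together they cover all pairs.

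The only point needing care is the bookkeeping in the lower bound. A singleton dumbbell with points in different blocks is a product of two one-vertex cliques, so it falls under the second type, not the first. The first type of (iii) is exactly singletons within the same block. With this reading, no dumbbell is counted for two different terms, and the argument is routine.
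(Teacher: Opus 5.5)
Your proof is correct and follows the paper's argument. The first claim is (i)$\Rightarrow$(ii) of \cref{lem:blocks}, and the formula follows from classifying dumbbells via (iii), so that each one covers pairs of only one part (a single intra-block pair, or pairs of one block pair $\{i,j\}$); your separate lower and upper bounds, and your remark that cross-block singleton dumbbells are clique products, spell out what the paper compresses into ``the minimum is the sum of the minima of the parts.''
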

\begin{proof}
The first claim follows from \cref{lem:blocks}. By the same lemma, a pair of two points of $V_i$ is covered only by the dumbbell consisting of these two singletons, and a dumbbell that covers a pair with one point in $V_i$ and one in $V_j$, $i<j$, has a clique $A$ of $G_i$ as one side and a clique $B$ of $G_j$ as the other, and covers exactly the pairs corresponding to $A\times B$. Such dumbbells cover all pairs between $V_i$ and $V_j$ if and only if the corresponding products cover $V_i\times V_j$. Since every dumbbell covers pairs of only one of these parts, the minimum is the sum of the minima of the parts.
\end{proof}

Our main hardness result follows from the hardness of approximating the chromatic number. We use it in the following form: for every $\gamma\in(0,1/2)$, it is NP-hard, given a graph $G$ on $n$ vertices such that $\bar\chi(G)\le n^\gamma$ or $\omega(G)\le n^\gamma$, to decide which of the two cases holds. For the complement $\bar G$, the proof of Theorem~1.2 of Zuckerman~\cite{Zuc07}, which derandomizes a reduction of Feige and Kilian~\cite{FK98}, shows that it is NP-hard to distinguish $\chi_f(\bar G)\le n^{\gamma/2}$ from $\alpha(\bar G)\le n^{\gamma/2}$, where $\chi_f$ is the fractional chromatic number and $\alpha$ the independence number. Since $\chi(\bar G)\le(1+\ln n)\,\chi_f(\bar G)$ (Lov\'asz~\cite[Cor.~2 and Thm.~6]{Lov75}), and $\bar\chi(G)=\chi(\bar G)$ and $\omega(G)=\alpha(\bar G)$, the claim follows for $n$ large enough.

\begin{theorem}\label{thm:hard}
Let $\eps>0$ be rational, $\tau\in[1,2)$ with $\tau\eps<1/2$, and $c<2/3$. Unless P${}={}$NP, no polynomial-time algorithm computes, for every finite metric space $P$ on $N$ points, a $\tau\eps$-WSPD of $P$ of size at most $N^c\opt_\eps(P)$.
\end{theorem}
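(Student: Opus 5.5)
The plan is a gap-preserving reduction from the promise problem stated just before the theorem. Given $\gamma\in(0,1/2)$ and a graph $G$ on $n$ vertices with $\bar\chi(G)\le n^\gamma$ or $\omega(G)\le n^\gamma$, build the block metric $P$ from $k$ disjoint copies $G_1,\dots,G_k$ of $G$ with $\Delta=1/\eps$. Since $\eps$ is rational, $\Delta$ is rational and $P$ has a polynomial-size encoding, and $\Delta=1/\eps>2\tau\ge 2$ as in \cref{lem:blocks}. Take $k=\lceil n^{2-2\gamma}\rceil$, so $N=kn=\Theta(n^{3-2\gamma})$. By \cref{lem:blockopt},
\[
  \opt_\eps(P)=k\binom n2+\binom k2\,\mathrm{cc}(G,G),
\]
and every $\tau\eps$-WSPD of $P$ is an $\eps$-WSPD, so its size is at least $\opt_\eps(P)$.

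Next, evaluate this formula in the two cases using \eqref{eq:rbounds}. In the yes case, $\mathrm{cc}(G,G)\le\bar\chi(G)^2\le n^{2\gamma}$. With the chosen $k$ we have $kn^2\le k^2n^{2\gamma}$, so $\opt_\eps(P)\le 2k^2n^{2\gamma}$. In the no case, $\mathrm{cc}(G,G)\ge n^2/\omega(G)^2\ge n^{2-2\gamma}$, so $\opt_\eps(P)\ge\binom k2 n^{2-2\gamma}\ge k^2n^{2-2\gamma}/4$ for $k\ge2$. The gap between the two cases is therefore at least $n^{2-4\gamma}/8$. Expressed in $N$, this is $N^{(2-4\gamma)/(3-2\gamma)-o(1)}$.

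Given $c<2/3$, choose $\gamma>0$ small enough that $(2-4\gamma)/(3-2\gamma)>c$. Suppose an algorithm returns a $\tau\eps$-WSPD $\W$ of size at most $N^c\opt_\eps(P)$. In the yes case, $|\W|\le N^c\cdot2k^2n^{2\gamma}$. In the no case, $|\W|\ge\opt_\eps(P)\ge k^2n^{2-2\gamma}/4$. For $n$ large enough, the first bound is smaller than the second. Comparing $|\W|$ with the threshold $k^2n^{2-2\gamma}/4$ therefore decides the promise problem in polynomial time, and small $n$ are handled by brute force. This contradicts its NP-hardness unless P${}={}$NP.

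The main obstacle is the bookkeeping of exponents. The balance point $k\approx n^{2-2\gamma}$ must make the intra-block term $k\binom n2$ negligible in the yes case without shrinking the gap, and the gap must then be translated into a power of $N$ rather than $n$; this translation is what yields the limit $2/3$ as $\gamma\to0$. The correctness of the distinguisher itself rests entirely on \cref{lem:blockopt}.
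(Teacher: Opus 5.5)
Your proof is correct and follows the paper's argument: you build a block metric with $\Delta=1/\eps$ from many copies of $G$, evaluate $\opt_\eps(P)$ through \cref{lem:blockopt} and \eqref{eq:rbounds} in the two promise cases, and threshold the output size. The only difference is the number of blocks: the paper takes $n^2$ blocks, so $N=n^3$ and it needs $3c+4\gamma<2$, whereas your choice $k=\lceil n^{2-2\gamma}\rceil$ gives the slightly better exponent $(2-4\gamma)/(3-2\gamma)$ for fixed $\gamma$, with the same limit $2/3$.
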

\begin{proof}
Since $c<2/3$, there is $\gamma\in(0,1/2)$ with $3c+4\gamma<2$. Given a graph $G$ on $n\ge2$ vertices, let $P$ be the block metric with parameter $\Delta=1/\eps$ and $\beta=n^2$ blocks, each a copy of $G$. It has $N=n^3$ points, so it can be built, and the algorithm run on it, in time polynomial in $n$. By \cref{lem:blockopt},
\[
  \opt_\eps(P)=\beta\binom n2+\binom\beta2\,\mathrm{cc}(G,G).
\]
If $\bar\chi(G)\le n^\gamma$, then $\mathrm{cc}(G,G)\le n^{2\gamma}$ by~\eqref{eq:rbounds}, and $\opt_\eps(P)\le n^4/2+n^{4+2\gamma}/2\le n^{4+2\gamma}$. If $\omega(G)\le n^\gamma$, then $\mathrm{cc}(G,G)\ge n^{2-2\gamma}$ by~\eqref{eq:rbounds}, and since $\binom\beta2\ge n^4/4$, every $\tau\eps$-WSPD of $P$, which is an $\eps$-WSPD by \cref{lem:blockopt}, has at least $n^{6-2\gamma}/4$ dumbbells. In the first case, the algorithm outputs at most $N^c n^{4+2\gamma}=n^{3c+4+2\gamma}$ dumbbells, which is less than $n^{6-2\gamma}/4$ as soon as $n^{2-3c-4\gamma}>4$. Hence, for $n$ above a constant, comparing the output size with $n^{6-2\gamma}/4$ decides which case holds, and smaller instances are solved by brute force.
\end{proof}

For $c=0$, \cref{thm:hard} states that no polynomial-time algorithm computes a $\tau\eps$-WSPD of size $\opt_\eps(P)$. This case needs only the NP-completeness of clique cover~\cite{Kar72}. Given a graph $G$ on $n$ vertices, let $P$ be the block metric with parameter $\Delta=1/\eps$ and two blocks, $G$ and a graph with a single vertex. By \cref{lem:blockopt}, a $\tau\eps$-WSPD of $P$ of size $\opt_\eps(P)$ has exactly $\binom n2+\bar\chi(G)$ dumbbells, so such an algorithm would compute $\bar\chi(G)$. For $\tau=1$ and $\eps=1/3$, this is the reduction of Buchin et al.~\cite[Thm.~28]{BCHKLPRW26}.

For $\tau<1$, a $\tau\eps$-WSPD is an $\eps$-WSPD, so for $\eps<1/2$ \cref{thm:hard} also holds for every $\tau<1$. The factor $2$ is inherent to block metrics: for $\tau\ge2$, a side at distance $\Delta=1/\eps$ may contain two nonadjacent vertices of a block, and the cliques no longer play a role. Metrics of this type cannot spread the two distances inside a block further apart either, because the triangle inequality forces the distance of two nonadjacent vertices with a common neighbor to be at most twice the edge length. The exponent $2/3$ is also the limit of this construction: with $n^b$ blocks instead of $n^2$, the same argument gives every $c<\min\{b,2\}/(b+1)$, which is largest for $b=2$.

In particular, for every $\delta\in(0,1]$ and every fixed rational $\eps\le1/4$, no polynomial-time algorithm achieves the separation factor $1+2\delta/3<2$ of \cref{thm:doubling} in general metrics with the size factor $N^c$ for any $c<2/3$, unless P${}={}$NP, let alone with the size factor $1+\delta$. Some polynomial size factor is always achievable: the $\binom N2$ dumbbells with singleton sides form a WSPD, and $\opt_\eps\ge N-1$ by \cref{lem:mst}, so the best exponent lies between $2/3$ and $1$. On the other hand, for $\eps<1/33$, Har-Peled et al.~\cite{HRR25} achieve the size factor $7$ with the separation factor $33$ in every metric space.

\section{Conclusion}
\label{sec:conclusion}

We gave a bicriteria approximation scheme for \minwspd in fixed dimension and in doubling metrics, and showed that in general metrics no such scheme exists, even for a fixed $\eps\le1/4$, unless P${}={}$NP.

Several questions remain open. Is there a PTAS that does not relax $\eps$, for instance via a separability lemma for optimal WSPDs in the plane? Can the approximation scheme be made to run in near-linear time? Does local search that exchanges up to $k$ dumbbells give a $(1+O(1/\sqrt k))$-approximation in the plane, as it does for covering problems with pseudodisks~\cite{MR10} and low-density objects~\cite{HQ17}, and for \minwspd on the line~\cite{BCHKLPRW26}? Is \minwspd NP-hard on the line, as asked by Buchin et al., or in the plane for a fixed $\eps$? Does an approximation scheme exist for the partition version? Finally, in general metrics and for small $\eps$, what is the smallest $\tau$ such that a $\tau\eps$-WSPD of size $O(\opt_\eps)$ can be computed in polynomial time? By \cref{thm:hard} and the result of Har-Peled et al.~\cite{HRR25}, it lies between $2$ and $33$. Going beyond $2$ requires metrics other than the block metrics of \cref{lem:blocks}.

\bibliography{refs}

\begin{thebibliography}{10}

\bibitem{AH12}
Mohammad~A. Abam and Sariel Har-Peled.
\newblock New constructions of {SSPDs} and their applications.
\newblock {\em Comput. Geom.}, 45(5--6):200--214, 2012.
\newblock \href {https://doi.org/10.1016/j.comgeo.2011.12.003}
  {\path{doi:10.1016/j.comgeo.2011.12.003}}.

\bibitem{Bar00}
Reuven Bar-Yehuda.
\newblock One for the price of two: A unified approach for approximating
  covering problems.
\newblock {\em Algorithmica}, 27(2):131--144, 2000.
\newblock \href {https://doi.org/10.1007/s004530010009}
  {\path{doi:10.1007/s004530010009}}.

\bibitem{BCHKLPRW26}
Kevin Buchin, Jacobus Conradi, Sariel Har-Peled, Antonia Kalb, Abhiruk Lahiri,
  Lukas Pl{\"a}tz, Carolin Rehs, and Sampson Wong.
\newblock On small pair decompositions for point sets.
\newblock In {\em Proc. 34th Annual European Symposium on Algorithms (ESA)},
  volume 388 of {\em LIPIcs}, pages 102:1--102:22, 2026.
\newblock Full version: arXiv:2601.22728.
\newblock \href {https://doi.org/10.4230/LIPIcs.ESA.2026.102}
  {\path{doi:10.4230/LIPIcs.ESA.2026.102}}.

\bibitem{CK93}
Paul~B. Callahan and S.~Rao Kosaraju.
\newblock Faster algorithms for some geometric graph problems in higher
  dimensions.
\newblock In {\em Proc. 4th Annual ACM-SIAM Symposium on Discrete Algorithms
  (SODA)}, pages 291--300, 1993.

\bibitem{CK95}
Paul~B. Callahan and S.~Rao Kosaraju.
\newblock A decomposition of multidimensional point sets with applications to
  $k$-nearest-neighbors and $n$-body potential fields.
\newblock {\em J. ACM}, 42(1):67--90, 1995.
\newblock \href {https://doi.org/10.1145/200836.200853}
  {\path{doi:10.1145/200836.200853}}.

\bibitem{CRW91}
Vasilis Capoyleas, G{\"u}nter Rote, and Gerhard~J. Woeginger.
\newblock Geometric clusterings.
\newblock {\em J. Algorithms}, 12(2):341--356, 1991.
\newblock \href {https://doi.org/10.1016/0196-6774(91)90007-L}
  {\path{doi:10.1016/0196-6774(91)90007-L}}.

\bibitem{CHHK14}
Parinya Chalermsook, Sandy Heydrich, Eugenia Holm, and Andreas Karrenbauer.
\newblock Nearly tight approximability results for minimum biclique cover and
  partition.
\newblock In {\em Proc. 22nd Annual European Symposium on Algorithms (ESA)},
  volume 8737 of {\em LNCS}, pages 235--246, 2014.
\newblock \href {https://doi.org/10.1007/978-3-662-44777-2_20}
  {\path{doi:10.1007/978-3-662-44777-2_20}}.

\bibitem{DGRSW25}
Lindsey Deryckere, Joachim Gudmundsson, Andr{\'e} van Renssen, Yuan Sha, and
  Sampson Wong.
\newblock A {WSPD}, separator and small tree cover for $c$-packed graphs.
\newblock In {\em Proc. 19th International Symposium on Algorithms and Data
  Structures (WADS)}, volume 349 of {\em LIPIcs}, pages 21:1--21:15, 2025.
\newblock \href {https://doi.org/10.4230/LIPIcs.WADS.2025.21}
  {\path{doi:10.4230/LIPIcs.WADS.2025.21}}.

\bibitem{DP11}
Adrian Dumitrescu and J{\'a}nos Pach.
\newblock Minimum clique partition in unit disk graphs.
\newblock {\em Graphs Combin.}, 27(3):399--411, 2011.
\newblock \href {https://doi.org/10.1007/s00373-011-1026-1}
  {\path{doi:10.1007/s00373-011-1026-1}}.

\bibitem{EJS05}
Thomas Erlebach, Klaus Jansen, and Eike Seidel.
\newblock Polynomial-time approximation schemes for geometric intersection
  graphs.
\newblock {\em SIAM J. Comput.}, 34(6):1302--1323, 2005.
\newblock \href {https://doi.org/10.1137/S0097539702402676}
  {\path{doi:10.1137/S0097539702402676}}.

\bibitem{ELS26}
Aryan Esmailpour, Khoi Le, and Stavros Sintos.
\newblock Compact representations of geometric bipartite graphs via weighted
  biclique covers, 2026.
\newblock arXiv:2608.20767.
\newblock \href {https://doi.org/10.48550/arXiv.2608.20767}
  {\path{doi:10.48550/arXiv.2608.20767}}.

\bibitem{FK98}
Uriel Feige and Joe Kilian.
\newblock Zero knowledge and the chromatic number.
\newblock {\em Journal of Computer and System Sciences}, 57(2):187--199, 1998.
\newblock \href {https://doi.org/10.1006/jcss.1998.1587}
  {\path{doi:10.1006/jcss.1998.1587}}.

\bibitem{GZ05}
Jie Gao and Li~Zhang.
\newblock Well-separated pair decomposition for the unit-disk graph metric and
  its applications.
\newblock {\em SIAM J. Comput.}, 35(1):151--169, 2005.
\newblock \href {https://doi.org/10.1137/S0097539703436357}
  {\path{doi:10.1137/S0097539703436357}}.

\bibitem{GP71}
Ronald~L. Graham and Henry~O. Pollak.
\newblock On the addressing problem for loop switching.
\newblock {\em Bell System Tech. J.}, 50(8):2495--2519, 1971.
\newblock \href {https://doi.org/10.1002/j.1538-7305.1971.tb02618.x}
  {\path{doi:10.1002/j.1538-7305.1971.tb02618.x}}.

\bibitem{GW26}
Joachim Gudmundsson and Sampson Wong.
\newblock A well-separated pair decomposition for low density graphs.
\newblock In {\em Proc. 2026 Annual ACM-SIAM Symposium on Discrete Algorithms
  (SODA)}, pages 3688--3723, 2026.
\newblock \href {https://doi.org/10.1137/1.9781611978971.135}
  {\path{doi:10.1137/1.9781611978971.135}}.

\bibitem{HL12}
Sariel Har-Peled and Mira Lee.
\newblock Weighted geometric set cover problems revisited.
\newblock {\em J. Comput. Geom.}, 3(1):65--85, 2012.
\newblock \href {https://doi.org/10.20382/jocg.v3i1a4}
  {\path{doi:10.20382/jocg.v3i1a4}}.

\bibitem{HM06}
Sariel Har-Peled and Manor Mendel.
\newblock Fast construction of nets in low-dimensional metrics and their
  applications.
\newblock {\em SIAM J. Comput.}, 35(5):1148--1184, 2006.
\newblock \href {https://doi.org/10.1137/S0097539704446281}
  {\path{doi:10.1137/S0097539704446281}}.

\bibitem{HQ17}
Sariel Har-Peled and Kent Quanrud.
\newblock Approximation algorithms for polynomial-expansion and low-density
  graphs.
\newblock {\em SIAM J. Comput.}, 46(6):1712--1744, 2017.
\newblock \href {https://doi.org/10.1137/16M1079336}
  {\path{doi:10.1137/16M1079336}}.

\bibitem{HRR25}
Sariel Har-Peled, Benjamin Raichel, and Eliot~W. Robson.
\newblock Well-separated pairs decomposition revisited, 2025.
\newblock arXiv:2509.05997.

\bibitem{HM85}
Dorit~S. Hochbaum and Wolfgang Maass.
\newblock Approximation schemes for covering and packing problems in image
  processing and {VLSI}.
\newblock {\em J. ACM}, 32(1):130--136, 1985.
\newblock \href {https://doi.org/10.1145/2455.214106}
  {\path{doi:10.1145/2455.214106}}.

\bibitem{Kar72}
Richard~M. Karp.
\newblock Reducibility among combinatorial problems.
\newblock In {\em Complexity of Computer Computations}, pages 85--103. Plenum
  Press, 1972.
\newblock \href {https://doi.org/10.1007/978-1-4684-2001-2_9}
  {\path{doi:10.1007/978-1-4684-2001-2_9}}.

\bibitem{Lov75}
L{\'a}szl{\'o} Lov{\'a}sz.
\newblock On the ratio of optimal integral and fractional covers.
\newblock {\em Discrete Mathematics}, 13(4):383--390, 1975.
\newblock \href {https://doi.org/10.1016/0012-365X(75)90058-8}
  {\path{doi:10.1016/0012-365X(75)90058-8}}.

\bibitem{MR10}
Nabil~H. Mustafa and Saurabh Ray.
\newblock Improved results on geometric hitting set problems.
\newblock {\em Discrete Comput. Geom.}, 44(4):883--895, 2010.
\newblock \href {https://doi.org/10.1007/s00454-010-9285-9}
  {\path{doi:10.1007/s00454-010-9285-9}}.

\bibitem{NS07}
Giri Narasimhan and Michiel H.~M. Smid.
\newblock {\em Geometric Spanner Networks}.
\newblock Cambridge University Press, 2007.
\newblock \href {https://doi.org/10.1017/CBO9780511546884}
  {\path{doi:10.1017/CBO9780511546884}}.

\bibitem{NH05}
Tim Nieberg and Johann Hurink.
\newblock A {PTAS} for the minimum dominating set problem in unit disk graphs.
\newblock In {\em Proc. 3rd Workshop on Approximation and Online Algorithms
  (WAOA)}, volume 3879 of {\em LNCS}, pages 296--306, 2006.
\newblock \href {https://doi.org/10.1007/11671411_23}
  {\path{doi:10.1007/11671411_23}}.

\bibitem{NHK08}
Tim Nieberg, Johann Hurink, and Walter Kern.
\newblock Approximation schemes for wireless networks.
\newblock {\em ACM Trans. Algorithms}, 4(4):49:1--49:17, 2008.
\newblock \href {https://doi.org/10.1145/1383369.1383380}
  {\path{doi:10.1145/1383369.1383380}}.

\bibitem{PS12}
Imran~A. Pirwani and Mohammad~R. Salavatipour.
\newblock A weakly robust {PTAS} for minimum clique partition in unit disk
  graphs.
\newblock {\em Algorithmica}, 62(3--4):1050--1072, 2012.
\newblock \href {https://doi.org/10.1007/s00453-011-9503-8}
  {\path{doi:10.1007/s00453-011-9503-8}}.

\bibitem{Sch79}
Arnold Sch{\"o}nhage.
\newblock On the power of random access machines.
\newblock In {\em Proc. 6th International Colloquium on Automata, Languages and
  Programming (ICALP)}, volume~71 of {\em LNCS}, pages 520--529. Springer,
  1979.
\newblock \href {https://doi.org/10.1007/3-540-09510-1_42}
  {\path{doi:10.1007/3-540-09510-1_42}}.

\bibitem{Smi18}
Michiel H.~M. Smid.
\newblock The well-separated pair decomposition and its applications.
\newblock In Teofilo~F. Gonzalez, editor, {\em Handbook of Approximation
  Algorithms and Metaheuristics, Volume 2: Contemporary and Emerging
  Applications}, chapter~4, pages 71--84. Chapman and Hall/CRC, 2nd edition,
  2018.
\newblock \href {https://doi.org/10.1201/9781351235426}
  {\path{doi:10.1201/9781351235426}}.

\bibitem{Var98}
Kasturi~R. Varadarajan.
\newblock A divide-and-conquer algorithm for min-cost perfect matching in the
  plane.
\newblock In {\em Proc. 39th Annual Symposium on Foundations of Computer
  Science (FOCS)}, pages 320--331, 1998.
\newblock \href {https://doi.org/10.1109/SFCS.1998.743466}
  {\path{doi:10.1109/SFCS.1998.743466}}.

\bibitem{Zuc07}
David Zuckerman.
\newblock Linear degree extractors and the inapproximability of max clique and
  chromatic number.
\newblock {\em Theory of Computing}, 3(6):103--128, 2007.
\newblock \href {https://doi.org/10.4086/toc.2007.v003a006}
  {\path{doi:10.4086/toc.2007.v003a006}}.

\end{thebibliography}
\end{document}